\documentclass[11pt]{article}
\usepackage[margin=1in]{geometry}
\usepackage{amsfonts,latexsym,amsthm,amssymb,amsmath,amscd,euscript,tikz,mathtools}
\usepackage{framed}
\usepackage{hyperref}
\hypersetup{colorlinks=true,citecolor=blue,urlcolor =black,linkbordercolor={1 0 0}}
\usepackage{enumitem}
\usepackage{siunitx}
\usepackage{textcomp}
\usepackage{physics}
\usepackage{mathrsfs}
\usepackage{dsfont}
\usepackage[ruled,vlined]{algorithm2e}
\usepackage{titlesec}
\usepackage{tikz-cd}

\allowdisplaybreaks[1]

% \titleformat{\section}
%   {\large\bfseries}{Problem \thesection}{1em}{}
% \titleformat{\subsection}[runin]
%   {\bfseries}{\thesubsection}{1em}{}

%Below are the theorem, definition, example, lemma, etc. body types.

\newtheorem{theorem}{Theorem}
\newtheorem{proposition}[theorem]{Proposition}
\newtheorem{lemma}[theorem]{Lemma}
\newtheorem{corollary}[theorem]{Corollary}

\theoremstyle{definition}
\newtheorem{definition}[theorem]{Definition}

\theoremstyle{remark}

% You can define new commands to make your life easier.

\newcommand{\cK}{\mathcal{K}}

\newcommand{\cX}{\mathcal{X}}

\newcommand{\bN}{\mathbb{N}}

\newcommand{\bR}{\mathbb{R}}

\newcommand{\1}{\mathds{1}}

\newcommand{\Wu}{W^{\uparrow}}
\newcommand{\Wd}{W^{\downarrow}}
\newcommand{\ud}{\uparrow\downarrow}
\newcommand{\du}{\downarrow\uparrow}
\newcommand{\Wud}{W^{\ud}}
\newcommand{\Wdu}{W^{\du}}
% \newcommand{\ud}{\bigtriangleup}
% \newcommand{\du}{\bigtriangledown}

% We can even define a new command for \newcommand!
\newcommand{\nc}{\newcommand}

% Make enumerate use lettering
% \renewcommand{\theenumi}{\alph{enumi}}

% If you want a new function, use operatorname to define that function (don't use \text)
\nc{\on}{\operatorname}
\nc{\Spec}{\on{Spec}}
\nc{\Aut}{\textit{Aut}}
\nc{\id}{\textit{id}}
\nc{\chr}{\on{char}}
\nc{\im}{\on{im}}
\nc{\Hom}{\on{Hom}}
\nc{\lcm}{\on{lcm}}
\nc{\dual}[1]{\prescript{t}{}{#1}}
\nc{\transpose}[1]{{#1}^{\intercal}}
\nc{\Sym}{\on{Sym}}
\nc{\End}{\on{End}}
\nc{\stab}{\on{stab}}
\nc{\Li}{\on{Li}}
\nc{\spn}{\on{span}}
\nc{\sgn}{\on{sgn}}
\nc{\supp}{\on{supp}}
\nc{\Unif}{\on{Unif}}

% Math operators

% Use \numberthis to number one equation in an align

\title{Improved Product-Based High-Dimensional Expanders}
\author{Louis Golowich\thanks{Harvard University. Email: \href{mailto:lgolowich@college.harvard.edu}{lgolowich@college.harvard.edu}}}
\date{July 3, 2021}

\begin{document}
\maketitle

% TODO:
% \begin{itemize}
% \item References formatting (LSV05, etc.)
% \end{itemize}

\begin{abstract}
  High-dimensional expanders generalize the notion of expander graphs to higher-dimensional simplicial complexes. In contrast to expander graphs, only a handful of high-dimensional expander constructions have been proposed, and no elementary combinatorial construction with near-optimal expansion is known. In this paper, we introduce an improved combinatorial high-dimensional expander construction, by modifying a previous construction of Liu, Mohanty, and Yang (ITCS 2020), which is based on a high-dimensional variant of a tensor product. Our construction achieves a spectral gap of $\Omega(\frac{1}{k^2})$ for random walks on the $k$-dimensional faces, which is only quadratically worse than the optimal bound of $\Theta(\frac{1}{k})$. Previous combinatorial constructions, including that of Liu, Mohanty, and Yang, only achieved a spectral gap that is exponentially small in $k$. We also present reasoning that suggests our construction is optimal among similar product-based constructions.
\end{abstract}

\section{Introduction}
Graphs that are sparse but well connected, called expander graphs, have numerous applications in various areas of computer science (see for example \cite{hoory_expander_2006}). Recently, there has been much interest in generalizing the notion of expansion to higher-dimensional simplicial complexes, beginning with the work of Lineal and Meshulam~\cite{linial_homological_2006}, Meshulam and Wallach~\cite{meshulam_homological_2009}, and Gromov~\cite{gromov_singularities_2010}. While multiple notions of high-dimensional expansion have been introduced (see the survey by Lubotzky~\cite{lubotzky_high_2018}), these notions agree in that random simplicial complexes are not good high-dimensional expanders. In contrast, ordinary random graphs are near-optimal expanders with high probability. Therefore constructions of high-dimensional expanders are of particular interest. Early constructions of high-dimensional expanders, namely Ramanujan complexes \cite{lubotzky_ramanujan_2005,lubotzky_explicit_2005}, were quite mathematically involved, whereas more simple and combinatorial constructions have been introduced recently.

In this paper, we introduce a modification of the high-dimensional expander construction of Liu, Mohanty, and Yang~\cite{liu_high-dimensional_2020}, which is based on a sort of high-dimensional tensor product. We then show that this modification gives an exponential improvement in the dependence of the $k$th order spectral gap on the dimension $k$, and we discuss why our results suggest this modification is optimal among constructions with the same general product structure. These results also address a question posed by Liu et al.~\cite{liu_high-dimensional_2020} pertaining to the limitations of their product-based construction.

\subsection{High-dimensional expanders}
A high-dimensional expander is a simplicial complex with certain expansion properties. A \textbf{simplicial complex} is a hypergraph with downward-closed hyperedges, called faces. That is, a simplicial complex $X$ on $n$ vertices is a collection of faces $X\subset 2^{[n]}$, where for any face $\sigma\in X$, all subsets of $\sigma$ are also faces in $X$. The \textbf{dimension} of a face $\sigma$ is $\dim(\sigma)=|\sigma|-1$, and the dimension of $X$ is the maximum dimension of any face in $X$. The \textbf{1-skeleton} of a simplicial complex is the undirected $n$-vertex graph whose edges are given by the 1-dimensional faces. We restict attention to \textbf{pure} simplicial complexes, meaning that every face is contained in a face of maximal dimension.

We consider the notion of high-dimensional expansion introduced by Kaufman and Mass \cite{kaufman_high_2017}, which requires rapid mixing for the high-order ``up-down'' or ``down-up'' random walks on simplicial complexes. The $k$-dimensional up-down walk on a simplicial complex specifies transition probabilities for a walk that alternates between faces of dimension $k$ and $k+1$. The 0-dimensional up-down walk is an ordinary lazy random walk on the 1-skeleton of the simplicial complex. Therefore 1-dimensional expanders are just ordinary (spectral) expander graphs. The spectral gaps of higher-order walks are difficult to bound directly, so they are instead typically bounded using a line of work \cite{kaufman_high_2017,dinur_high_2017,kaufman_high_2018,alev_improved_2020}, which has shown that a large spectral gap for high-order walks is implied by good local expansion, that is, good spectral expansion of a specific set of graphs that describe the local structure of the simplicial complex. Formal definitions of high-order walks and local expansion are provided in Section~\ref{sec:prelim}.

We are interested in constructions of infinite families of high-dimensional expanders with bounded degree and spectral gap for all dimensions. Specifically, for any fixed $H\geq 1$, a \textbf{$H$-dimensional expander family} is a family $\cX$ of $H$-dimensional simplicial complexes such that there is no finite upper bound on the number of vertices of elements $X\in\cX$, and the following two properties hold:
\begin{enumerate}
\item Bounded degree: There exists some $\overline{d}>0$ such that for every $X\in\cX$, each vertex in $X$ belongs to at most $\overline{d}$ faces.
\item Bounded spectral gap: There exists some $\underline{\nu}>0$ such that for every $X\in\cX$ and every $0\leq k\leq H-1$, the $k$-dimensional up-down walk on $X$ has spectral gap $\geq\underline{\nu}$.
  % TODO: add requirement that it be explicit?
\end{enumerate}
In general, the spectral gap of the $k$-dimensional up down walk cannot be greater than $\frac{2}{k+2}$ (see for example Proposition~3.3 of \cite{alev_improved_2020}). Our goal is to prove good lower bounds for this spectral gap for specific constructions of $H$-dimensional expander families. That is, we are interested in the optimal relationship between dimension and spectral gap, and only require an arbitrary upper bound on degree. This goal differs from the study of expander graphs, and specifically Ramanujan graphs, which focuses on the optimal relationship between degree and spectral gap.

While Kaufman and Mass~\cite{kaufman_high_2017} showed that Ramanujan complexes are high-dimensional expanders, multiple more elementary constructions have since been introduced \cite{conlon_hypergraph_2019,conlon_hypergraph_2020,chapman_expander_2020,liu_high-dimensional_2020,kaufman_high_2020,friedgut_hyper-regular_2020}. However, only three of these constructions \cite{liu_high-dimensional_2020,kaufman_high_2020,friedgut_hyper-regular_2020} provide constant-degree high-dimensional expander families of all dimensions. The construction of Kaufman and Oppenheim~\cite{kaufman_high_2020} and the hyper-regular variant introduced by Friedgut and Iluz~\cite{friedgut_hyper-regular_2020} are based on coset geometries, and achieve near-optimal expansion in all dimensions. In contrast, the construction of Liu et al.~\cite{liu_high-dimensional_2020} is much more elementary, as it consists of a sort of high-dimensional tensor product between an expander graph and a constant-sized complete simplicial complex. However, this construction has suboptimal expansion in high dimensions. Specifically, Alev and Lau~\cite{alev_improved_2020} showed that the $k$-dimensional up-down walk on the $H$-dimensional construction of Liu et al.~\cite{liu_high-dimensional_2020} has spectral gap at least $\frac{c}{(k+2)2^{k+2}}$, where $1\leq c\leq 2$ is a constant depending on $H$ and on the expander graph used in the construction. Note that this bound has exponential dependence on $k$, in contrast to the optimal linear dependence $\Omega(\frac{1}{k})$.

\subsection{Contributions}
In this paper, we present a modification of the high-dimensional expander family of Liu et al.~\cite{liu_high-dimensional_2020}, for which we show that the spectral gap of the $k$-dimensional up-down walk is at least $\frac{1}{(1+\log H)(k+2)(k+1)}$. This quadratic dependence on $k$ provides an exponential improvement compared to the spectral gap bound of $\frac{c}{(k+2)2^{k+2}}$ for the construction of Liu et al.~\cite{liu_high-dimensional_2020}. We attain this exponential improvement using the same product structure as Liu et al.~\cite{liu_high-dimensional_2020} while adjusting the weights of faces. Our modified construction also yields improved local expansion in high dimensions. For every $1\leq k\leq H-2$, we show that the 1-skeleton of the link of any $k$-dimensional face in our construction has spectral gap at least $\frac{k+1}{k+2}$, an improvement over the analagous bound of $\frac12$ for the construction of Liu et al.~\cite{liu_high-dimensional_2020}.

The organization of the remainder of this paper is as follows. Section~\ref{sec:prelim} presents preliminary notions, and Section~\ref{sec:construct} presents our main construction along with some basic properties. In Section~\ref{sec:localexp}, we compute the local expansion of the construction, from which a result of Alev and Lau~\cite{alev_improved_2020} implies rapid mixing of high-order walks. Section~\ref{sec:conc} discusses potential generalizations and limitations.

\section{Background and preliminaries}
\label{sec:prelim}
This section provides basic definitions pertaining to simplicial complexes and high-dimensional expanders, as well as relevant past results.
\begin{definition}
  A \textbf{simplicial complex} $X$ on $n$ vertices is a subset $X\subset 2^{[n]}$ such that if $\sigma\in X$, then all subsets of $\sigma$ also belong to $X$. Let $X(k)=\{\sigma\in X:|\sigma|=k+1\}$, and let the \textbf{$k$-skeleton} of $X$ refer to the simplicial complex $\bigcup_{\ell\leq k}X(k)$. The elements of $X(k)$ will be referred to as \textbf{$k$-dimensional faces}. The \textbf{dimension} of a simplicial complex is the maximum dimension of any of its faces, and if each face is contained in a face of maximal dimension, then the complex is \textbf{pure}. A \textbf{balanced weight function} $m:X\rightarrow\bR_+$ on a pure simplicial complex $X$ is a function such that for every $-1\leq k<\dim(X)$ and every $\sigma\in X(k)$, $$m(\sigma)=\sum_{\tau:\sigma\subset\tau\in X(k+1)}m(\tau).$$ 
\end{definition}
The 1-skeleton of a simplicial complex $X$ with balanced weight function $m$ is the undirected weighted graph $(X(0),X(1),m)$ that has vertices $[n]=X(0)$, edges $X(1)$, and edge weights $m(e)$ for $e\in X(1)$. The weighted degree of a vertex $x$ in this graph is given by the weight $m(x)$.

This paper will restrict attention to pure weighted simplicial complexes with balanced weight functions. In this case, the faces and weight function may be defined only on faces of maximal dimension, then propagated downwards, and the following useful formula applies.

\begin{lemma}[\cite{oppenheim_local_2018}]
  \label{lem:weightdef}
  For every $H$-dimensional simplicial complex $X$, every $-1\leq k\leq H$, and every $\sigma\in X(k)$, $$m(\sigma)=(H-k)!\sum_{\tau:\sigma\subset\tau\in X(H)}m(\tau).$$
\end{lemma}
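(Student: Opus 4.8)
The plan is to prove the identity by downward induction on $k$, from $k=H$ down to $k=-1$. The base case $k=H$ is immediate: $(H-H)!=1$, and the only face $\tau\in X(H)$ with $\sigma\subseteq\tau$ is $\tau=\sigma$ itself, so both sides equal $m(\sigma)$. (As a sanity check, the case $k=H-1$ is just the defining identity of a balanced weight function, and $(H-(H-1))!=1$ matches.)

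For the inductive step, assume the claim for $k+1$ and fix $\sigma\in X(k)$. By the definition of a balanced weight function,
\[
  m(\sigma)=\sum_{\rho:\sigma\subset\rho\in X(k+1)}m(\rho).
\]
Applying the inductive hypothesis to each $\rho$ and interchanging the order of summation yields
\[
  m(\sigma)=(H-k-1)!\sum_{\tau:\sigma\subset\tau\in X(H)}\Big|\{\rho\in X(k+1):\sigma\subset\rho\subset\tau\}\Big|\cdot m(\tau).
\]
The one substantive step is the inner count: for a fixed $\tau\in X(H)$ with $\sigma\subseteq\tau$, the faces $\rho$ with $\sigma\subset\rho\subset\tau$ and $|\rho|=k+2$ correspond exactly to the choices of a single extra vertex from $\tau\setminus\sigma$, and such $\rho$ automatically lie in $X(k+1)$ because $X$ is downward closed and $\tau\in X$. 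Since $|\tau\setminus\sigma|=(H+1)-(k+1)=H-k$, there are precisely $H-k$ such $\rho$. Substituting gives $m(\sigma)=(H-k-1)!\,(H-k)\sum_{\tau}m(\tau)=(H-k)!\sum_{\tau:\sigma\subset\tau\in X(H)}m(\tau)$, completing the induction.

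I do not anticipate a genuine obstacle here; the only care needed is in the reindexing, namely verifying that summing over pairs $(\rho,\tau)$ with $\sigma\subset\rho\in X(k+1)$ and $\rho\subset\tau\in X(H)$ is the same as summing over $\tau\supseteq\sigma$ in $X(H)$ weighted by the number of intermediate $\rho$, which is where downward-closedness of $X$ is used. An alternative, non-inductive argument would expand $m(\sigma)$ by iterating the balanced-weight identity through all $H-k$ levels at once and tracking the multiplicity of each $\tau\in X(H)$; that multiplicity is the number of maximal chains $\sigma=\sigma_k\subset\sigma_{k+1}\subset\cdots\subset\sigma_H=\tau$, which equals $(H-k)!$. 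The induction simply packages this counting more cleanly.
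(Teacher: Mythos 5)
Your argument is correct: the downward induction, the use of downward-closedness to guarantee the intermediate faces $\rho$ lie in $X(k+1)$, and the count of $H-k$ choices for the added vertex (equivalently, the $(H-k)!$ maximal chains from $\sigma$ to $\tau$) are exactly what is needed, and the base case $k=H$ (reading $\subset$ as $\subseteq$ there) is handled properly. Note that the paper itself gives no proof of this lemma --- it is imported from Oppenheim's work \cite{oppenheim_local_2018} --- so there is no in-paper argument to compare against; your proof is the standard chain-counting/induction argument one would expect, and it is complete.
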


Just as ordinary expander graph families are specified to have bounded degree, we are interested in families of simplicial complexes satisfying an analagous notion:
\begin{definition}
  A family $\cX$ of simplicial complexes has \textbf{bounded degree} if there exists some constant $\overline{d}$ such that for every $X\in\cX$ and every vertex $x\in X(0)$, there are at most $\overline{d}$ faces in $X$ that contain $x$.
\end{definition}

The local properties of a simplicial complex are captured by the links of faces, defined below.
\begin{definition}
  For a simplicial complex $X$ with weight function $m$, the \textbf{link} $X_\sigma$ of any $\sigma\in X$ is the simplicial complex defined by $X_\sigma=\{\tau\setminus\sigma:\sigma\subset\tau\in X\}$ with weight function $m_\sigma(\tau\setminus\sigma)=m(\tau)$.
\end{definition}
A common theme in the study of high-dimensional expanders is the ``local-to-global'' paradigm, which uses bounds on the expansion of the 1-skeletons of links to prove global expansion properties. To state such local-to-global results, it is first necessary to define graph expansion.

\begin{definition}
  For a graph $G$, the adjacency matrix is denoted $M_G$, the diagonal degree matrix is denoted $D_G$, and the random walk matrix is denoted $W_G=M_GD_G^{-1}$. The eigenvalues of the random walk matrix are denoted from greatest to least by $\omega_i(G)=\omega_i(W_G)$. The \textbf{expansion}, or \textbf{spectral gap}, of $G$ is the quantity $\nu_2(G)=\nu_2(W_G)=1-\omega_2(G)$.
\end{definition}
For any $n$-vertex graph $G$, all eigenvalues $\omega_i(W_G)$ of the random walk matrix lie in $[-1,1]$, and $\omega_1(W_G)=1$, so $0\leq\nu_2(G)\leq 1+\frac{1}{n-1}$ because $\sum_i\omega_i(W_G)=\Tr(W_G)\geq 0$. Graphs with spectral gaps closer to $1$ are considered ``better'' expanders.

We now introduce a notion of expansion for simplicial complexes.
\begin{definition}
  For $-1\leq k\leq H-2$, the \textbf{$k$-dimensional local expansion} of a simplicial complex $X$ with weight function $m$ is the value $$\nu^{(k)}(X)=\min_{\sigma\in X(k)}\nu_2(X_\sigma(0),X_\sigma(1),m_\sigma).$$ The \textbf{local expansion} of $X$ is the minimum of the $k$-dimensional local expansion over all $k\geq 0$, while the \textbf{global expansion} equals the $(-1)$-dimensional local expansion.
\end{definition}
That is, the $k$-dimensional local expansion refers to the lowest expansion of the 1-skeleton of the link of any $k$-dimensional face. Note that the terminology $k$-dimensional local expansion is nonstandard, but will be useful here. The following result shows that good local expansion in higher dimensions implies good local expansion in lower dimensions.
\begin{proposition}[\cite{oppenheim_local_2018}]
  \label{prop:localdown}
  Let $X$ be a simplicial complex in which all links of dimension $\geq 1$ are connected. Then for every $0\leq k\leq\dim(X)-2$, $$\nu^{(k-1)}(X)\geq 2-\frac{1}{\nu^{(k)}(X)}.$$
\end{proposition}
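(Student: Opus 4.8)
The plan is to run Oppenheim's ``trickling-down'' argument: reduce the statement to a claim about a single complex of dimension $\geq 2$ together with its vertex links. Fix $\sigma\in X(k-1)$ and set $Y=X_\sigma$, with weight function $m_\sigma$; since $0\leq k\leq\dim(X)-2$, $Y$ is pure of dimension $\dim(X)-k\geq 2$, and it is a link of dimension $\geq 1$, so by hypothesis its $1$-skeleton $G:=(Y(0),Y(1),m_\sigma)$ is connected. For each vertex $v\in Y(0)$, the vertex link $Y_v=X_{\sigma\cup\{v\}}$ is the link of a $k$-dimensional face of $X$, hence its $1$-skeleton has spectral gap at least $\lambda:=\nu^{(k)}(X)$. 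It therefore suffices to prove $\nu_2(G)\geq 2-1/\lambda$ and then take the minimum over all $\sigma\in X(k-1)$.

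Next I would record the two ``localization identities'' on which the argument rests. Equip the functions on the vertices of any weighted complex $Z$ with $\langle g,h\rangle_Z=\sum_v m(v)g(v)h(v)$, and write $\cE_Z(g)=\sum_{\{u,v\}\in Z(1)}m(uv)(g(u)-g(v))^2$ for the Dirichlet form of its $1$-skeleton; using the balanced-weight relation $m(v)=\sum_u m(uv)$ one checks $\cE_Z(g)=\langle g,g\rangle_Z-\langle g,Ag\rangle_Z$, where $(Ag)(v)=\frac{1}{m(v)}\sum_{u\sim v}m(uv)g(u)$ is the neighbour-averaging operator (whose eigenvalues are exactly those of $W_Z$). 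Then, for every $f\colon Y(0)\to\bR$, with $f_v$ the restriction of $f$ to $Y_v(0)$,
$$\cE_G(f)=\sum_{v\in Y(0)}\cE_{Y_v}(f_v),\qquad \|f\|_G^2=\sum_{v\in Y(0)}\|f_v\|_{Y_v}^2;$$
the first follows by expanding each side as a sum over triangles in $Y(2)$ (using $m_\sigma(uv)=\sum_w m_\sigma(uvw)$), the second by expanding over edges in $Y(1)$.

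Now take $f$ to be a ``second eigenfunction'' of $G$: a nonzero $f$ with $f\perp_G\1$ and $(Af)(v)=\omega_2(G)f(v)$ for all $v$ (such $f$ exists since the eigenvalues of $A$ on $G$ are those of $W_G$). The key observation is that the average of $f_v$ over the link of $v$, namely $\langle f_v,\1\rangle_{Y_v}/\|\1\|_{Y_v}^2=\frac{1}{m_\sigma(v)}\sum_{u\sim v}m_\sigma(uv)f(u)$, equals $(Af)(v)=\omega_2(G)f(v)$; since $\|\1\|_{Y_v}^2=m_\sigma(v)$, the component of $f_v$ along $\1$ in $Y_v$ has squared norm $\omega_2(G)^2f(v)^2 m_\sigma(v)$. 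Writing $f_v=\omega_2(G)f(v)\1+f_v^\perp$ with $f_v^\perp\perp_{Y_v}\1$, the spectral gap bound $\cE_{Y_v}(f_v^\perp)\geq\lambda\|f_v^\perp\|_{Y_v}^2$ together with $\cE_{Y_v}(f_v)=\cE_{Y_v}(f_v^\perp)$ and the localization identities yields
$$\cE_G(f)=\sum_v\cE_{Y_v}(f_v)\geq\lambda\sum_v\big(\|f_v\|_{Y_v}^2-\omega_2(G)^2 f(v)^2 m_\sigma(v)\big)=\lambda\big(1-\omega_2(G)^2\big)\|f\|_G^2.$$
On the other hand $\cE_G(f)=\|f\|_G^2-\langle f,Af\rangle_G=(1-\omega_2(G))\|f\|_G^2$. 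Comparing the two, and using $\omega_2(G)<1$ (because $G$ is connected) to divide by $1-\omega_2(G)>0$, gives $1\geq\lambda(1+\omega_2(G))$, i.e. $\nu_2(G)=1-\omega_2(G)\geq 2-1/\lambda$. Taking the minimum over $\sigma\in X(k-1)$ completes the proof.

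I expect the main obstacle to be the two localization identities: they depend essentially on the balanced-weight hypothesis and require careful triangle-by-triangle (respectively edge-by-edge) bookkeeping, and one must keep the random-walk eigenvalue $\omega_2(G)$ consistent with the Dirichlet-form normalization in the weighted inner product $\langle\cdot,\cdot\rangle_G$. A secondary point is the reduction step: one must check that $X_\sigma$ is pure of dimension $\geq 2$ so that its vertex links are genuine graphs, and that its $1$-skeleton is connected so that $\omega_2(G)<1$ — this is exactly where the hypothesis that links of dimension $\geq 1$ are connected is used.
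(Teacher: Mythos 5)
Your proof is correct: the reduction to a fixed link $Y=X_\sigma$ with vertex links $Y_v=X_{\sigma\cup\{v\}}$, the two localization identities for $\|\cdot\|^2$ and the Dirichlet form, the observation that the constant component of $f_v$ has coefficient $\omega_2(G)f(v)$, and the final comparison $(1-\omega_2)\geq\lambda(1-\omega_2^2)$ followed by dividing out $1-\omega_2>0$ constitute exactly Oppenheim's trickling-down argument, which is the proof of the cited result. The paper itself gives no proof of this proposition (it is imported from \cite{oppenheim_local_2018}), so your argument matches the source's approach rather than anything done internally in the paper.
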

In particular, Proposition~\ref{prop:localdown} implies that if $\nu^{(k)}(X)\geq 1-\epsilon$, then $\nu^{(k-1)}(X)\geq 1-O(\epsilon)$.

The definition below presents the high-order random walks on simplicial complexes.
\begin{definition}
  Fix a pure $H$-dimensional simplicial complex $X$. For $-1\leq k\leq H-1$, define the \textbf{up-step random walk operator} $\Wu_k\in\bR^{X(k+1)\times X(k)}$ so that for $\sigma\in X(k),\tau\in X(k+1)$,
  \begin{align*}
    \Wu_k(\tau,\sigma) &= \begin{cases}
      \frac{m(\tau)}{m(\sigma)},&\sigma\subset\tau\in X(k+1)\\
      0,&\text{otherwise}
    \end{cases}
  \end{align*}
  For $0\leq k\leq H$, define the \textbf{down-step random walk operator} $\Wd_k\in\bR^{X(k-1)\times X(k)}$ so that for $\sigma\in X(k),\tau\in X(k-1)$,
  \begin{align*}
    \Wd_k(\tau,\sigma) &= \begin{cases}
      \frac{1}{k+1},&\sigma\supset\tau\in X(k-1) \\
      0,&\text{otherwise}.
    \end{cases}
  \end{align*}
  Define the \textbf{up-down} and \textbf{down-up random walk operators} by $\Wud_k=\Wd_{k+1}\circ\Wu_k$ and $\Wdu_k=\Wu_{k-1}\circ\Wd_k$ respectively.
\end{definition}
For context with this definition, consider a 1-dimensional simplicial complex $X$, which may be viewed as a weighted, undirected graph. Then the up-step operator $\Wu_0$ moves from a vertex to an adjacent edge with probability proportial to its weight, while the down-step operator $\Wd_1$ moves from an edge to either of its vertices with probability $\frac12$. Thus $\Wud_0=\Wd_1\circ\Wu_0$ is the ordinary graph lazy random walk operator, with stationary probability $\frac12$.

The nonzero elements of the spectra of $\Wud_k$ and of $\Wdu_{k+1}$ are identical. Therefore when studying the expansion of these operators, we restrict attention without loss of generality to $\Wud_k$.

The spectral gap $\nu_2(\Wud_k)$ gives a measure of high-dimensional expansion. Local expansion, defined above, provides another notion of high-dimensional expansion. The following result, which follows the ``local-to-global'' paradigm, shows that these two notions are closely related.
\begin{theorem}[\cite{alev_improved_2020}]
  \label{thm:localtomix}
  Let $X$ be an $H$-dimensional simplicial complex, and let $W_k^{\ud}$ refer to the up-down walk operator on $X$. Then for every $0\leq k\leq H-1$, $$\nu_2(W_k^{\ud})\geq\frac{1}{k+2}\prod_{j=-1}^{k-1}\nu^{(j)}(X).$$
\end{theorem}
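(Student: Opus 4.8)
The plan is to prove this ``local-to-global'' estimate by working in the Hilbert space $C^k=\bR^{X(k)}$ equipped with the $m$-weighted inner product $\langle f,g\rangle_k=\sum_{\sigma\in X(k)}m(\sigma)f(\sigma)g(\sigma)$. First I would record the routine setup: with respect to these inner products the appropriately normalized up-step map $U_k\colon C^k\to C^{k+1}$ and down-step map $D_{k+1}\colon C^{k+1}\to C^k$ are adjoint up to the explicit scalar $k+2$, so $\Wud_k=D_{k+1}U_k$ is self-adjoint and positive semidefinite, its top eigenvalue is $1$ and is attained exactly by the constant functions (the relevant $1$-skeletons being connected), and hence the claim is equivalent to the operator inequality $I-\Wud_k\succeq\frac1{k+2}\big(\prod_{j=-1}^{k-1}\nu^{(j)}(X)\big)(I-\Pi)$, with $\Pi$ the projection onto constants. (If some link is disconnected then the corresponding $\nu^{(j)}(X)\le 0$, the product is nonpositive, and the bound holds trivially, so I may assume all relevant links are connected.) I would also record two structural ingredients: the Garland--Oppenheim localization identity writing $D_{k+1}U_k=\frac1{k+2}I+\frac{k+1}{k+2}\widehat M_k$, where $\widehat M_k$ deletes a uniformly random vertex of the current face, performs the non-lazy $1$-skeleton walk on the corresponding codimension-one link, and re-adds a vertex; and the observation that links of $j$-faces of a vertex link $X_v$ are links of $(j{+}1)$-faces of $X$, so $\nu^{(j)}(X_v)\ge\nu^{(j+1)}(X)$.

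The core of the argument is the canonical orthogonal decomposition $C^k=\bigoplus_{i=-1}^{k}V_i$, where $V_i$ is the image of $\ker D_i\subseteq C^i$ under the iterated up-map $U_{k-1}\circ\cdots\circ U_i$ (so $V_{-1}$ is the line of constants and $V_k=\ker D_k$). Using the commutation relations between $D$ and $U$ that underlie the localization identity, one checks that this decomposition is orthogonal and is (essentially) preserved by $\Wud_k$, so it suffices to bound the second eigenvalue of $\Wud_k$ restricted to each $V_i$ with $i\ge 0$ by $1-\frac1{k+2}\prod_{j=-1}^{k-1}\nu^{(j)}(X)$. The mechanism I would use: via the localization identity, pushing the down-steps hidden in $D_{k+1}U_k$ past the up-lifts reduces the analysis on $V_i$ to new-at-level-$i$ functions, and a function in $\ker D_i$ localized to the $1$-skeleton of a codimension-$i$ link is precisely orthogonal to constants there, so the spectral gap of that link --- at least $\nu^{(i-1)}(X)$ --- enters directly, while Proposition~\ref{prop:localdown} propagates the estimate through the remaining lower levels and contributes the factors $\prod_{j=-1}^{i-2}\nu^{(j)}(X)$; the iterated up-lifts contribute only a purely combinatorial factor, the one visible already in the complete complex, which over the worst choice of $i$ is exactly $\frac1{k+2}$. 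Thus each $V_i$ contributes a gap at least $\frac1{k+2}$ times a product of local gaps that is termwise at least $\prod_{j=-1}^{k-1}\nu^{(j)}(X)$; the base case $k=0$ is the direct computation $\Wud_0=\frac12(I+W_G)$, giving $\nu_2(\Wud_0)=\frac12\nu^{(-1)}(X)$, and minimizing over the orthogonal summands yields the theorem.

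The subtle point --- and the reason this bound improves exponentially on cruder local-to-global theorems --- is that each local gap $\nu^{(j)}(X)$ must be ``charged'' essentially to the single summand $V_{j+1}$ and never reused. A naive induction that bounds the level-$k$ walk in terms of the level-$(k-1)$ walk on links and closes the loop through the identity $\nu_2(\Wdu_{k+1})=\nu_2(\Wud_k)$ ends up re-paying every intermediate local gap at every level, losing a constant factor per step and producing a bound decaying like $2^{-k}$; the decomposition is exactly what prevents this double counting. Accordingly I expect the real work to be two-fold, and this is where the main obstacle lies: first, establishing carefully that the subspaces $V_i$ are genuinely $\Wud_k$-invariant and mutually orthogonal, which is where the precise form of the $D$--$U$ commutation identities is used; and second, the per-subspace eigenvalue estimate for $0\le i<k$, where one must track the combinatorial factor and the trickled-down product of local gaps simultaneously and argue that the error terms arising from the complex not being complete are absorbed by exactly those local gaps and nothing more.
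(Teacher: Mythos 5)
The paper does not prove this statement at all; it is imported verbatim from Alev and Lau \cite{alev_improved_2020}, so the only question is whether your sketch would stand on its own. It would not, because the step you flag as ``one checks'' is precisely where the argument breaks: for a general weighted simplicial complex the subspaces $V_i=U_{k-1}\cdots U_i(\ker D_i)$ are neither mutually orthogonal nor invariant under $\Wud_k$. Orthogonality of $\ker D_k$ to $\operatorname{im}U_{k-1}$ is the only part that comes for free from adjointness; pushing further down requires commutation identities of the form $D_{j+1}U_j=c_jU_{j-1}D_j+d_jI$, which hold exactly only for complete-complex-like objects. In a general complex the defect $D_{j+1}U_j-c_jU_{j-1}D_j$ is the Garland local-walk operator, which is not a multiple of the identity, so the decomposition is only \emph{approximately} orthogonal/invariant, and making that precise requires a strong two-sided local spectral expansion hypothesis (this is the Kaufman--Oppenheim / Dikstein--Dinur--Filmus--Harsha route) together with error terms that are not controlled by the quantities $\nu^{(j)}(X)$ appearing here. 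The theorem you are asked to prove, by contrast, holds for every balanced-weight pure complex with only one-sided local gap bounds --- including cases where $\nu^{(-1)}(X)$ is tiny, as for the complex $Z$ in this paper --- so a proof routed through an (approximate) eigendecomposition cannot deliver it in the stated generality. The same hand-waving affects your per-subspace estimate: the claim that the iterated up-lifts cost ``exactly $\frac1{k+2}$ in the worst case'' is an exact-complete-complex fact, not something you have derived for general $X$.

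The actual Alev--Lau argument avoids any global decomposition. It writes $\Wud_k=\frac1{k+2}I+\frac{k+1}{k+2}N_k$ with $N_k$ the non-lazy walk, localizes the quadratic form of $N_k$ to vertex links via the Garland/Oppenheim identity, and runs an induction on $k$ in which the link walks at level $k-1$ are bounded by the induction hypothesis applied to $X_v$ (using, as you note, $\nu^{(j)}(X_v)\geq\nu^{(j+1)}(X)$); the careful bookkeeping of the laziness in that induction is exactly what avoids the factor-$2$-per-level loss, each $\nu^{(j)}(X)$ being paid once. If you want to salvage your write-up, either reproduce that induction, or restrict to two-sided local expanders and carry out the approximate-decomposition analysis with explicit error terms --- but as written, the orthogonality/invariance assertion and the per-subspace bound are gaps, not routine verifications.
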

Thus if a simplicial complex has good local expansion, then its high-order walks have large spectral gaps. We apply this result to show our main high-dimensional expansion bound.

The bound in Theorem~\ref{thm:localtomix} is nearly tight for good local expanders:
\begin{proposition}[\cite{alev_improved_2020}]
  \label{prop:gapupperbound}
  Let $X$ be an $H$-dimensional simplicial complex with at least $2(H+1)$ vertices, and let $\Wud_k$ refer to the up-down walk operator on $X$. Then for every $0\leq k\leq H-1$, $$\nu_2(\Wud_k)\leq\frac{2}{k+2}.$$
\end{proposition}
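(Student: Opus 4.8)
The plan is to bound $\nu_2(\Wud_k)=1-\omega_2(\Wud_k)$ from above by exhibiting a single explicit test function on $X(k)$ with large Rayleigh quotient under $\Wud_k$. Recall that $\Wud_k=\Wd_{k+1}\Wu_k$ is self-adjoint and positive semidefinite with respect to the weighted inner product $\langle f,g\rangle_\pi=\sum_{\sigma\in X(k)}\pi(\sigma)\,f(\sigma)g(\sigma)$, where $\pi\propto m|_{X(k)}$ is its stationary distribution, that the constant function $\1$ is an eigenvector with eigenvalue $1$, and hence by the variational characterization of eigenvalues that $\omega_2(\Wud_k)=\max\{R(f):f\neq 0,\ \langle f,\1\rangle_\pi=0\}$, where $R(f)=\langle f,\Wud_k f\rangle_\pi/\langle f,f\rangle_\pi$. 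It therefore suffices to produce one function $f$ orthogonal to $\1$ with $R(f)\geq\frac{k}{k+2}$.

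For a vertex $v\in X(0)$, write $p_v=\Pr_{\sigma\sim\pi}[v\in\sigma]$, let $g_v(\sigma)=\1[v\in\sigma]$, and set $f_v=g_v-p_v\1$, so that $\langle f_v,\1\rangle_\pi=0$ and $\langle f_v,f_v\rangle_\pi=p_v(1-p_v)$. The crux is the action of $\Wud_k$ on $g_v$: from $\sigma\in X(k)$ the walk moves up to a $(k+1)$-face $\tau\supset\sigma$ and then deletes a uniformly random one of the $k+2$ vertices of $\tau$; since $\sum_{u\in\tau}\1[v\in\tau\setminus\{u\}]=(k+1)\,\1[v\in\tau]$ and $\sum_{\tau\supset\sigma}m(\tau)/m(\sigma)=1$ by balancedness, this gives $(\Wud_k g_v)(\sigma)=\tfrac{k+1}{k+2}\sum_{\tau\supset\sigma}\tfrac{m(\tau)}{m(\sigma)}\1[v\in\tau]$, which in particular equals $\tfrac{k+1}{k+2}$ whenever $v\in\sigma$. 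Using $\Wud_k\1=\1$ together with self-adjointness, $\langle f_v,\Wud_k f_v\rangle_\pi=\langle g_v,\Wud_k g_v\rangle_\pi-p_v^2=\bE_\pi[\1[v\in\sigma]\cdot\tfrac{k+1}{k+2}]-p_v^2=\tfrac{k+1}{k+2}p_v-p_v^2$. Hence, provided $0<p_v<1$, $R(f_v)=\dfrac{(k+1)/(k+2)-p_v}{1-p_v}$, and a one-line rearrangement (multiplying through by $1-p_v>0$) shows $R(f_v)\geq\tfrac{k}{k+2}$ precisely when $p_v\leq\tfrac12$.

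It remains to exhibit a vertex $v$ with $0<p_v\leq\tfrac12$. Every face of $X(k)$ has exactly $k+1$ vertices, so $\sum_{v\in X(0)}p_v=\bE_{\sigma\sim\pi}|\sigma|=k+1$; purity forces every vertex into an $H$-face and hence (since $k\leq H-1$) into some $k$-face, so $p_v>0$ for all $v$; and since $|X(0)|\geq 2(H+1)$ while $k+1\leq H$, the average value of $p_v$ is at most $\tfrac{k+1}{2(H+1)}\leq\tfrac{H}{2(H+1)}<\tfrac12$, so some vertex $v$ has $0<p_v<\tfrac12$. For this $v$ the function $f_v$ is nonzero, orthogonal to $\1$, and satisfies $R(f_v)\geq\tfrac{k}{k+2}$, whence $\omega_2(\Wud_k)\geq\tfrac{k}{k+2}$ and $\nu_2(\Wud_k)\leq\tfrac{2}{k+2}$. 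All the algebra here is routine; the only content-bearing step — and the one that must be gotten exactly right — is the observation that from a $k$-face containing $v$ the up-down walk lands on a face still containing $v$ with probability exactly $\tfrac{k+1}{k+2}$, independent of the local geometry of $X$, which is what makes $\langle f_v,\Wud_k f_v\rangle_\pi$ collapse to a clean closed form; the self-adjointness and variational characterization invoked at the start are standard facts about these walks, and the hypothesis $|X(0)|\geq 2(H+1)$ is used only to guarantee a vertex with $p_v\leq\tfrac12$.
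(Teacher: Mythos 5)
Your proof is correct. Note that the paper does not prove this statement at all---it is imported from Alev and Lau \cite{alev_improved_2020} as a black box---so there is no internal proof to compare against; your argument is a valid, self-contained derivation, and it is in the same spirit as the variational argument in the cited source. The key points all check out: $\Wud_k$ is reversible with respect to $\pi\propto m|_{X(k)}$ (for $\sigma\neq\sigma'$ with $\tau=\sigma\cup\sigma'\in X(k+1)$ one has $\pi(\sigma)\,\frac{m(\tau)}{m(\sigma)(k+2)}=\frac{m(\tau)}{Z(k+2)}$, symmetric in $\sigma,\sigma'$), so the Courant--Fischer step is legitimate; the identity $\sum_{u\in\tau}\1[v\in\tau\setminus\{u\}]=(k+1)\1[v\in\tau]$ together with balancedness does give $(\Wud_k g_v)(\sigma)=\frac{k+1}{k+2}$ whenever $v\in\sigma$; and the averaging argument $\sum_v p_v=k+1$ with $|X(0)|\geq 2(H+1)$ and $k+1\leq H$ correctly produces a vertex with $0<p_v<\frac12$, which is exactly where the hypothesis on the number of vertices enters.
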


The main purpose of this paper is to present a construction of high-dimensional expander families, defined below.
\begin{definition}
  \label{def:expfam}
  A family $\cX$ of $H$-dimensional simplicial complexes is an \textbf{$H$-dimensional expander family} if the following conditions hold:
  \begin{enumerate}
  \item For every $n\in\bN$, there exists some $X\in\cX$ with $|X(0)|\geq n$.
  \item $\cX$ has bounded degree.
  \item\label{it:exp} For every $0\leq k\leq H-1$, there exists some $\underline{\nu}_{\cX,k}>0$ such that for every $X\in X$, the $k$-dimensional up-down walk operator $\Wud_k$ on $X$ satisfies $\nu_2(\Wud_k)\geq\underline{\nu}_{\cX,k}$.
  \end{enumerate}
  % TODO: add requirement it be (fully?) explicit?
\end{definition}
We are interested in constructing high-dimensional expander families $\cX$ of all dimensions $H$ with the up-down walk spectral gaps $\underline{\nu}_{\cX,k}$ close to the upper bound in Proposition~\ref{prop:gapupperbound}. By Theorem~\ref{thm:localtomix}, item~\ref{it:exp} in Definition~\ref{def:expfam} is implied by a uniform lower bound on the local expansion $\nu^{(k)}(X)$ of all $X\in\cX$ for every $-1\leq k\leq H-1$. We use this fact in the analysis of our construction.

\section{Construction}
\label{sec:construct}
The following definition introduces the simplicial complex $Z$ with weight function $m$ considered in this paper. The construction takes as input an $n$-vertex graph $G$, which will typically be chosen from a family of expanders, as well as a dimension $H$ and a parameter $s\geq H+1$, the latter of which will typically be taken as $s=2H$. The parameters $H$ and $s$ will typically be treated as fixed values, while $G$ and $n$ vary, so that the construction provides a family of $H$-dimensional simplicial complexes.
\begin{definition}
  \label{def:construct}
Let $G=(V(G),E(G),w_G)$ be any connected undirected graph on $n$ vertices with no self-loops. For positive integers $H$ and $s$, define the $H$-dimensional simplicial complex $Z$ with vertex set $V(G)\times[s]$ such that
\begin{align*}
  Z(H) = \{\{(v_1,b_1),\dots,(v_{H+1},b_{H+1})\}\subset V(G)\times[s]:
  \;&\exists \{u,v\}\in E(G)\text{ s.t. }\{v_1,\dots,v_{H+1}\}=\{u,v\}, \\
    &b_1,\dots,b_{H+1}\text{ are all distinct}\}.
\end{align*}
Define a weight function $m$ on $Z$ so that if $\sigma=\{(v_1,b_1),\dots,(v_{H+1},b_{H+1})\}\in Z(H)$ is such that $|\{i:v_i=u\}|=j$ and $|\{i:v_i=v\}|=H+1-j$ for some edge $\{u,v\}\in E(G)$, then let $$m(\sigma)=\frac{w_G(\{u,v\})}{{H-1\choose j-1}}.$$
\end{definition}
In words, the $H$-dimensional faces of $Z$ are those sets of the form $\{(v_1,b_1),\dots,(v_{H+1},b_{H+1})\}$ such that all $v_i$ are contained in a single edge of $G$, and such that all $b_i$ are distinct. If the above definition is modified so that the condition $\{v_1,\dots,v_{H+1}\}=\{u,v\}$ is replaced with $\{v_1,\dots,v_{H+1}\}\subset\{u,v\}$ and so that $m(\sigma)=1$ for all $H$-dimensional faces $\sigma$, then the resulting simplicial complex $Q$ is exactly the construction of Liu et al.~\cite{liu_high-dimensional_2020}. The difference between the weight functions of $Z$ and $Q$ lead to the key insights of this paper. Also note that here $G$ may be a weighted graph, whereas Liu et al.~\cite{liu_high-dimensional_2020} only considered complexes $Q$ derived from unweighted graphs.

If $G$ is unweighted so that $w_G(\{u,v\})=1$ for all $\{u,v\}\in E(G)$, then every $\sigma\in Z$ has $(H-1)!m(\sigma)\in\bN$. Thus $Z$ may be viewed as an unweighted simplicial complex, meaning that all $H$-dimensional faces have the same weight, but where multiple copies of faces are permitted.

The simplicial complex $Z$ may be viewed as a sort of high-dimensional tensor product of the graph $G$ and the $s$-vertex, $H$-dimensional complete complex $\cK_{[s]}$. In particular, the faces in $Z(H)$ are exactly those $(H+1)$-element subsets of $V(G)\times[s]$ for which the projection onto the first component gives an edge in $G$, and the projection onto the second component gives a face in $\cK_{[s]}(H)$. For comparison, an analagous property holds for the ordinary graph tensor product $G_1\otimes G_2$, in which the edges are given by those pairs of elements of $V(G_1)\times V(G_2)$ whose projection onto either component gives an edge in the respective graph $G_1$ or $G_2$.

Note that as with $Q$, the simplicial complex $Z$ has bounded degree with respect to~$n$ if $G$ has bounded degree. In particular, fix some values $H$ and $s$, and let $G$ be chosen from a family of bounded degree graphs. For every $\{u,v\}\in E(G)$ and $1\leq j\leq H$, by definition
\begin{equation}
  \label{eq:facesetsize}
  |\{\sigma=\{(v_1,b_1),\dots,(v_{H+1},b_{H+1})\}\in Z(H):|\{i:v_i=u\}|=j\}|={s\choose H+1}{H+1\choose j}.
\end{equation}
It follows by the definition of $m$ that both the maximum weight and the maximum number of faces containing any face in $Z$ are bounded by a constant. Furthermore, the number of faces in $Z$ grows as $O(n)$.

An additional consequence of Equation~(\ref{eq:facesetsize}) is that while the cardinality of the set on the left hand side, which equals the sum of the weights of the faces in this set under the weight function of $Q$, has an exponential dependence on $j$, the sum of the weights of the faces in this set under the weight function of $Z$ is equal to ${s\choose H+1}{H+1\choose j}/{H-1\choose j-1}$, which has only a quadratic dependence on $j$. This observation provides some initial intuition for the exponential speedup of high-order walks on $Z$ compared to $Q$.

From here on, the weight function $m$ and the operators $\Wu_k,\Wd_k,\Wud_k,\Wdu_k$ will always refer to $Z$, unless explicitly stated otherwise.

\subsection{Main result}
\label{sec:mainresult}
Our main result, shown in Section~\ref{sec:localexp}, is stated below.
\begin{theorem}[Restatement of Corollary~\ref{cor:localapp}]
  Let $\Wud_k$ be the up-down walk operator for the simplicial complex $Z$ of Definition~\ref{def:construct}. If $H\geq 2$, $s\geq 2H$, and $n\geq 4$, then for every $0\leq k\leq H-1$, $$\nu_2(W_k^{\ud})\geq \frac{\nu_2(G)}{(1+\log H)(k+2)(k+1)}.$$
\end{theorem}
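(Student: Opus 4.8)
The plan is to apply Theorem~\ref{thm:localtomix} directly, so that the entire task reduces to computing (or lower-bounding) the local expansions $\nu^{(j)}(Z)$ for all $-1\leq j\leq H-1$. Concretely, Theorem~\ref{thm:localtomix} gives $\nu_2(\Wud_k)\geq\frac{1}{k+2}\prod_{j=-1}^{k-1}\nu^{(j)}(Z)$, so it suffices to show that $\prod_{j=-1}^{k-1}\nu^{(j)}(Z)\geq\frac{\nu_2(G)}{(1+\log H)(k+1)}$. The factor $\nu_2(G)$ should come from the $j=-1$ term (the global expansion), which governs the $G$-direction of the tensor-like product, while the remaining factors $\nu^{(j)}(Z)$ for $0\leq j\leq k-1$ should each be close to $1$ and contribute the telescoping-like product that produces the $\frac{1}{1+\log H}$ and $\frac{1}{k+1}$ losses. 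The stronger local claim advertised in the introduction, namely $\nu^{(j)}(Z)\geq\frac{j+1}{j+2}$ for $1\leq j\leq H-2$, is presumably proved first, and then $\prod_{j=1}^{k-1}\frac{j+1}{j+2}=\frac{2}{k+1}$ telescopes; combined with a constant-order bound on $\nu^{(0)}(Z)$ and the harmonic-sum estimate $\sum 1/j \approx \log H$ entering through whichever link is weakest, this yields the stated bound.

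The core of the argument is therefore the computation of the link structure of $Z$. I would proceed by induction on the dimension of the face, using the fact (noted after Lemma~\ref{lem:weightdef}) that links of the product-type complex $Z$ are again of a controlled product type. Fix a face $\sigma\in Z(j)$; its vertices project to some subset of a single edge $\{u,v\}$ of $G$ together with distinct labels in $[s]$. There are two cases. If $\sigma$ already uses both endpoints $u$ and $v$ (i.e.\ the projection is all of $\{u,v\}$), then the link $Z_\sigma$ lives entirely "over the edge $\{u,v\}$'': its $1$-skeleton is essentially that of the link of a face in the complete complex $\cK_{[s]}$ restricted to the two colors $u,v$, which is a highly connected graph — a join/blow-up structure whose random-walk spectral gap I would compute exactly and show equals $\frac{j+2-?}{j+2}$-type quantities; this is where the $\frac{j+1}{j+2}$ bound comes from and where the carefully chosen weights $m(\sigma)=w_G(\{u,v\})/\binom{H-1}{j-1}$ do their work, making the walk on the link genuinely close to uniform. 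If instead $\sigma$ uses only one vertex of $G$ (its projection is a single vertex $u$), then the link still "sees'' the graph $G$: walking in the link can move from color-$u$ vertices to color-$v$ vertices for neighbors $v$ of $u$, and the relevant $1$-skeleton is a weighted graph built from $G$'s local structure tensored with label data. Here one must relate $\nu_2$ of this link-graph back to $\nu_2(G)$; this contributes the $\nu_2(G)$ factor, and it is cleanest at $j=-1$ (the empty face) where the $1$-skeleton of $Z$ itself must be shown to have spectral gap $\Omega(\nu_2(G))$, but the same kind of estimate recurs for other small faces supported on a single $G$-vertex.

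The main obstacle, I expect, is the exact spectral computation for the "over-an-edge'' links: one needs the precise eigenvalue of the random-walk matrix of a colored, weighted blow-up graph, and getting the clean bound $\frac{j+1}{j+2}$ (rather than merely $\Omega(1/j)$ or a constant like $\frac12$) is exactly the point where $Z$ beats $Q$, so the weight normalization $1/\binom{H-1}{j-1}$ must be used sharply. A convenient route is to invoke Proposition~\ref{prop:localdown} to push good local expansion in the top dimension $H-2$ down to all lower dimensions, so that it suffices to nail the single value $\nu^{(H-2)}(Z)$ (or a small range of high-dimensional links where the structure is simplest) and then let the Oppenheim monotonicity inequality $\nu^{(j-1)}\geq 2-1/\nu^{(j)}$ propagate it; tracking the cumulative loss through that recursion is precisely what produces the $\log H$ factor, via $\prod(2-1/\nu^{(j)})$ estimates. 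I would also double-check the edge cases in the hypotheses ($H\geq2$, $s\geq 2H$ to guarantee enough labels so that all the relevant links are nonempty and connected, $n\geq4$ so $G$ has a meaningful spectral gap), and confirm that the final product of the per-dimension bounds, multiplied by the $\frac{1}{k+2}$ from Theorem~\ref{thm:localtomix}, assembles into $\frac{\nu_2(G)}{(1+\log H)(k+2)(k+1)}$.
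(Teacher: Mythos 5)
Your top-level skeleton matches the paper: apply Theorem~\ref{thm:localtomix} and reduce everything to computing the local expansions, splitting links according to whether the face projects onto a full edge $\{u,v\}$ of $G$ or onto a single vertex $u$. But two structural beliefs in your plan are wrong, and either one would derail the proof. First, you expect the links of positive-dimensional faces supported on a single $G$-vertex to require an estimate in terms of $\nu_2(G)$ (``the same kind of estimate recurs for other small faces supported on a single $G$-vertex''). In fact the paper shows (Lemma~\ref{lem:outlocal}) that such a link is the tensor product of a weighted \emph{star} graph on $\{u\}\cup N(u)$ with a complete graph on the unused labels, and by the weight ratios of Proposition~\ref{prop:weights} its second eigenvalue is exactly $\frac{1}{k+1}$, independent of $G$; this is essential, because $\nu^{(j)}(Z)$ is a minimum over \emph{all} $j$-faces, and if each vertex-supported link only admitted a bound proportional to $\nu_2(G)$, the product $\prod_{j}\nu^{(j)}$ would degrade to $\nu_2(G)^{\Theta(k)}$ rather than the single factor of $\nu_2(G)$ in the statement. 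The graph $G$ enters only through the $(-1)$-dimensional term, where the $1$-skeleton of $Z$ is identified with $\tilde G\otimes K_{[s]}$ for $\tilde G$ a lazy version of $G$ with laziness $1-1/\sum_{\ell=1}^{H}1/\ell$; that laziness is the sole source of the $1+\log H$ loss.

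Second, your proposed shortcut of nailing only $\nu^{(H-2)}(Z)$ and propagating downward with Proposition~\ref{prop:localdown} cannot deliver the theorem. The recursion $\nu^{(j-1)}\geq 2-1/\nu^{(j)}$ starting from $\frac{H-1}{H}$ reproduces exactly $\frac{j+1}{j+2}$ at every level and therefore bottoms out at $\nu^{(-1)}\geq 2-1/\tfrac12=0$, a vacuous global bound; moreover the recursion carries no information about $G$, so it can never produce the factor $\nu_2(G)$, and the $\log H$ does not arise as ``cumulative loss'' in this recursion but from the explicit lazy-walk computation just described (Lemma~\ref{lem:global}, again via Proposition~\ref{prop:weights} and Lemma~\ref{lem:weightratio}). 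So you must compute all three families of links directly --- edge-supported links as a $2$-vertex weighted graph tensored with a complete graph, vertex-supported links as a weighted star tensored with a complete graph, and the global $1$-skeleton as $\tilde G\otimes K_{[s]}$ --- with the weight-ratio identities $w^{(j+1,k-j)}_{(u,v)}/w^{(j,k-j+1)}_{(u,v)}=\frac{j}{k-j}$ and $w^{(k+1)}_{(u)}/\sum_{v\in N(u)}w^{(k,1)}_{(u,v)}=k\sum_{i=k+1}^{H}1/i$ doing the real work in all three cases; these computations are the content of the paper's proof and are absent from your plan beyond the (correct) intuition that the $1/\binom{H-1}{j-1}$ normalization must be used sharply.
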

In contrast to this quadratic dependence on $k$, Alev and Lau~\cite{alev_improved_2020} showed that the spectral gap of the $k$-dimensional up-down walk on $Q$ is at least $\frac{c\nu_2(G)}{(k+2)2^{k+2}}$, where $c=c(G,H)\in[1,2]$ depends on the structure of $G$. The discussion in Section~\ref{sec:relweights} below provides intuition for why this exponential dependence in $k$ arises for $Q$, and how the adjusted weights in $Z$ yield the improved quadratic dependence.

Theorem~\ref{sec:mainresult} implies that for any fixed $H$ with $s=2H$, if $G$ is chosen from a family of bounded degree expanders with spectral gap $\nu>0$, the resulting simplicial complexes $Z$ form a family of $H$-dimensional expanders with $k$-dimensional up-down walk spectral gap at least $\frac{\nu}{(1+\log H)(k+2)(k+1)}$. For comparison, an optimal $H$-dimensional expander family, as is given by simplicial complexes with optimal local expansion by Theorem~\ref{thm:localtomix}, achieves a spectral gap of $\Omega(\frac{1}{k})$ for the $k$-dimensional up-down walk.

\subsection{Decomposition into permutation-invariant subsets}
This section formalizes the intuitive notion that the construction of $Z$ treats elements of $[s]$ interchangeably, and introduces some notation to reflect this symmetry. For any $1\leq k\leq H+1$, the set of $(k-1)$-dimensional faces $Z(k-1)$ may be decomposed as follows. For any $\{u,v\}\in E(G)$ and $0\leq j\leq k$, let $$Z((j,k-j)_{(u,v)})=\{\{(v_1,b_1),\dots,(v_{k},b_{k})\}\in Z(H):|\{i:v_i=u\}|=j,|\{i:v_i=v\}|=k-j\}$$ be the set of all $(k-1)$-dimensional faces in $Z$ that contain $j$ vertices in $\{u\}\times[s]$ and $k-j$ vertices in $\{v\}\times[s]$. To remove redundancy when $j=k$, let $$Z((k)_u)=Z((k,0)_{(u,v)}).$$ Then by construction, $$Z(k-1)=\bigsqcup_{u\in V(G)}Z((k)_u)\sqcup\bigsqcup_{\{u,v\}\in E(G),1\leq j\leq k-1}Z((j,k-j)_{(u,v)}).$$ This decomposition simply partitions $Z$ into faces that differ only by permutations of $[s]$:
\begin{lemma}
  \label{lem:permact}
  For every $1\leq k\leq H+1$, there is a group action of $S_{[s]}=\{\text{permutations }\pi:[s]\rightarrow[s]\}$ on the set of faces of $Z$ given by
  \begin{equation*}
    \pi(\{(v_1,b_1),\dots,(v_k,b_k)\})=\{(v_1,\pi(b_1)),\dots,(v_k,\pi(b_k))\}.
  \end{equation*}
  The orbits of this action are exactly the sets $Z((j,k-j)_{(u,v)})$. The action preserves weights, that is, $m\circ\pi=m$.
\end{lemma}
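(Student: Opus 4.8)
The claim in Lemma~\ref{lem:permact} is essentially bookkeeping: it asserts that the symmetric group $S_{[s]}$ acts on the faces of $Z$ by permuting the second coordinates, that this action is well-defined, that its orbits are precisely the sets $Z((j,k-j)_{(u,v)})$, and that it preserves the weight function $m$. The plan is to verify each of these four points in turn, none of which requires more than unwinding the definitions in Definition~\ref{def:construct}.

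First I would check that the map $\pi$ sends faces of $Z$ to faces of $Z$, so that the group action is well-defined on the face set. It suffices to check this on $Z(H)$, since $Z$ is downward closed and $\pi$ commutes with taking subsets. Given $\sigma=\{(v_1,b_1),\dots,(v_{H+1},b_{H+1})\}\in Z(H)$, by definition there is an edge $\{u,v\}\in E(G)$ with $\{v_1,\dots,v_{H+1}\}=\{u,v\}$ and with $b_1,\dots,b_{H+1}$ all distinct. Applying $\pi$ leaves the first coordinates untouched, so the condition involving $\{u,v\}$ is unchanged, and since $\pi$ is a bijection, $\pi(b_1),\dots,\pi(b_{H+1})$ are again all distinct. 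Hence $\pi(\sigma)\in Z(H)$. That $\pi$ is a group action (i.e. $\id$ acts trivially and $(\pi_1\pi_2)(\sigma)=\pi_1(\pi_2(\sigma))$) is immediate because the action on each second coordinate is just the action of $S_{[s]}$ on $[s]$, and the face is merely the set of such pairs.

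Next I would identify the orbits. Fix $\{u,v\}\in E(G)$, $k$, and $j$; I claim $Z((j,k-j)_{(u,v)})$ is a single orbit. Two faces $\sigma,\sigma'$ in this set each consist of $j$ pairs with first coordinate $u$ and $k-j$ pairs with first coordinate $v$, with all $k$ second coordinates distinct within each face. Since a permutation of $[s]$ can send any $j$-subset of $[s]$ (the $u$-labels of $\sigma$) to any other $j$-subset (the $u$-labels of $\sigma'$) and simultaneously send the complementary labels appropriately — more carefully, one chooses $\pi\in S_{[s]}$ matching the second-coordinate multiset of the $u$-pairs of $\sigma$ to that of $\sigma'$ and likewise for the $v$-pairs, which is possible because all second coordinates in a face are distinct and $s\geq k$ — we get $\pi(\sigma)=\sigma'$. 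Conversely $\pi$ never changes first coordinates, so it preserves the counts $|\{i:v_i=u\}|$ and $|\{i:v_i=v\}|$, hence maps $Z((j,k-j)_{(u,v)})$ into itself; the edge $\{u,v\}$ is also preserved since the first coordinates are. Combined with the disjoint-union decomposition of $Z(k-1)$ stated just before the lemma, this shows the orbits are exactly these sets (with the $j=k$ case collapsed to $Z((k)_u)$ as noted).

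Finally, weight-invariance $m\circ\pi=m$: by Lemma~\ref{lem:weightdef} it suffices to check it on $Z(H)$, since the weight of a lower face is a fixed multiple of a sum of weights of the $H$-faces above it, and $\pi$ induces a bijection on those. For $\sigma\in Z(H)$ with $|\{i:v_i=u\}|=j$, the formula $m(\sigma)=w_G(\{u,v\})/\binom{H-1}{j-1}$ depends only on the edge $\{u,v\}$ and on $j$, both of which are unchanged by $\pi$; hence $m(\pi(\sigma))=m(\sigma)$. I do not anticipate a genuine obstacle here — the only point requiring a moment's care is producing the explicit permutation realizing transitivity on each orbit, where one must use that the second coordinates within a single face are distinct and that $s$ is large enough ($s\geq H+1\geq k$) to extend a partial bijection on $[s]$ to a full permutation.
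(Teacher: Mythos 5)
Your proof is correct and follows essentially the same route as the paper's: check that $\pi$ preserves faces and each set $Z((j,k-j)_{(u,v)})$, exhibit an explicit permutation matching the $u$-labels and $v$-labels of two faces to get transitivity on each such set, and deduce $m\circ\pi=m$ from the top-dimensional weight formula together with Lemma~\ref{lem:weightdef}. Your extra remark about extending the partial bijection using $s\geq H+1\geq k$ is a harmless added detail the paper leaves implicit.
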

\begin{proof}
  The construction of $Z$ directly implies that for all $\pi$, if $\sigma\in Z$ then $\pi(\sigma)\in Z$, so the group action on $Z$ is well defined. Similarly, for all $\pi$, the definition of $Z((j,k-j)_{(u,v)})$ directly implies that $\pi(Z((j,k-j)_{(u,v)}))=Z((j,k-j)_{(u,v)})$. For any $\sigma=\{(v_1,b_1),\dots,(v_k,b_k)\},\sigma'=\{(v_1',b_1'),\dots,(v_k',b_k')\}\in Z((j,k-j)_{(u,v)})$, if $\pi\in S_{[s]}$ is any permutation such that $\pi(\{b_i:v_i=u\})=\{b_i':v_i'=u\}$ and $\pi(\{b_i:v_i=v\})=\{b_i':v_i'=v\}$, then $\pi(\sigma)=\sigma'$. Thus $Z((j,k-j)_{(u,v)})$ is the orbit of $\sigma$ under the group action. For any $\pi$, to verify that $m\circ\pi=m$, note that by definition $m$ is constant over all values of $Z((j,H+1-j)_{(u,v)})$ for any given $0\leq j\leq H+1$ and $\{u,v\}\in E(G)$. Thus for all $\sigma\in Z(H)$, the characterization of the orbits above implies that $m(\pi(\sigma))=m(\sigma)$. This equality then extends to $\sigma$ of any dimension by Lemma~\ref{lem:weightdef}.
\end{proof}

Loosely speaking, Lemma~\ref{lem:permact} says that elements of $Z((j,k-j)_{(u,v)})$ may be treated interchangably, which in particular permits the following definition.
\begin{definition}
  For all $1\leq k\leq H+1$, $0\leq j\leq k$, and $\{u,v\}\in E(G)$, choose any $\sigma\in Z((j,k-j)_{(u,v)})$ and define $w_{(u,v)}^{(j,k-j)}=m(\sigma)$. This definition does not depend on the choice of $\sigma\in Z((j,k-j)_{(u,v)})$ by Lemma~\ref{lem:permact}. To avoid redundancy, also define $w_{(u)}^{(k)}=w_{(u,v)}^{(k,0)}$.
\end{definition}
Note that $m$ is defined by letting $w_{(u,v)}^{(j,H+1-j)}=w_G(\{u,v\})/{H-1\choose j-1}$. A basic property of these weights is that for any $1\leq j\leq k-1$, the ratio $w_{(u,v)}^{(j,k-j)}/w_G(\{u,v\})$ is independent of the choice of edge $\{u,v\}\in E(G)$, as is shown below.
\begin{lemma}
  \label{lem:weightratio}
  For all $2\leq k\leq H+1$, $1\leq j\leq k-1$, and $\{u,v\}\in E(G)$,
  \begin{equation*}
    \frac{w_{(u,v)}^{(j,k-j)}}{w_G(\{u,v\})} = (H+1-k)!\sum_{\ell=0}^{H+1-k}{s-k\choose \ell}{s-k-\ell\choose H+1-k-\ell}\cdot\frac{1}{{H-1\choose j+\ell-1}}.
  \end{equation*}
\end{lemma}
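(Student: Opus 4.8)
The plan is to apply Lemma~\ref{lem:weightdef} to a $(k-1)$-dimensional face $\sigma\in Z((j,k-j)_{(u,v)})$ and then enumerate the top-dimensional faces lying above $\sigma$, grouped according to how many of their ``$u$-vertices'' are new.

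First I would fix an arbitrary $\sigma=\{(v_1,b_1),\dots,(v_k,b_k)\}\in Z((j,k-j)_{(u,v)})$, so that $\sigma$ has $j$ vertices in $\{u\}\times[s]$ and $k-j$ vertices in $\{v\}\times[s]$, all with distinct second coordinates; by definition $w_{(u,v)}^{(j,k-j)}=m(\sigma)$, and this is independent of the choice of $\sigma$ by Lemma~\ref{lem:permact}. Since $\dim\sigma=k-1$ and $0\le k-1\le H$, Lemma~\ref{lem:weightdef} (with its dimension parameter equal to $k-1$) gives
$$ w_{(u,v)}^{(j,k-j)} = m(\sigma) = \bigl(H-(k-1)\bigr)!\sum_{\tau:\sigma\subset\tau\in Z(H)} m(\tau) = (H+1-k)!\sum_{\tau:\sigma\subset\tau\in Z(H)} m(\tau). $$

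Next I would classify the faces $\tau\in Z(H)$ with $\sigma\subset\tau$. Each such $\tau$ has $H+1$ vertices, all lying in $\{u,v\}\times[s]$ with distinct second coordinates, and is obtained from $\sigma$ by adjoining $H+1-k$ further vertices on brand-new second coordinates. Writing $\ell$ for the number of these new vertices lying in $\{u\}\times[s]$, we have $0\le\ell\le H+1-k$, and then $\tau$ has $j+\ell$ vertices in $\{u\}\times[s]$ and $(k-j)+(H+1-k-\ell)=H+1-j-\ell$ vertices in $\{v\}\times[s]$; since $1\le j\le k-1$ we get $j+\ell\ge 1$ and $H+1-j-\ell\ge 1$, so $\tau$ genuinely uses both $u$ and $v$ and hence $\tau\in Z((j+\ell,H+1-j-\ell)_{(u,v)})$. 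For a fixed $\ell$, building such a $\tau$ amounts to choosing the $\ell$ new second coordinates for the $u$-vertices and the $H+1-k-\ell$ new second coordinates for the $v$-vertices among the $s-k$ values not already used by $\sigma$, which can be done in $\binom{s-k}{\ell}\binom{s-k-\ell}{H+1-k-\ell}$ ways, and each resulting $\tau$ has $m(\tau)=w_{(u,v)}^{(j+\ell,H+1-j-\ell)}=w_G(\{u,v\})/\binom{H-1}{j+\ell-1}$ by the definition of $m$ (note $1\le j+\ell\le (k-1)+(H+1-k)=H$, so this binomial coefficient is a positive integer). Substituting the count and the weight into the displayed sum gives
$$ w_{(u,v)}^{(j,k-j)} = (H+1-k)!\sum_{\ell=0}^{H+1-k}\binom{s-k}{\ell}\binom{s-k-\ell}{H+1-k-\ell}\cdot\frac{w_G(\{u,v\})}{\binom{H-1}{j+\ell-1}}, $$
and dividing by $w_G(\{u,v\})$ yields the claimed identity. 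There is no genuine obstacle beyond bookkeeping: the two points needing care are the index shift when invoking Lemma~\ref{lem:weightdef} on a face of dimension $k-1$, and the verification that every top face above $\sigma$ contains at least one $u$-vertex and at least one $v$-vertex, which is precisely where the hypothesis $1\le j\le k-1$ enters.
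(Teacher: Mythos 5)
Your proof is correct and follows essentially the same route as the paper: apply Lemma~\ref{lem:weightdef} to a face $\sigma\in Z((j,k-j)_{(u,v)})$, group the top-dimensional faces above $\sigma$ by the number $\ell$ of added $u$-vertices, count them as $\binom{s-k}{\ell}\binom{s-k-\ell}{H+1-k-\ell}$, and substitute the defining weight $w_G(\{u,v\})/\binom{H-1}{j+\ell-1}$. Your extra checks on the index ranges are fine but only make explicit what the paper leaves implicit.
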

\begin{proof}
  For any $\sigma\in Z((j,k-j)_{(u,v)})$, any $H$-dimensional face $\tau\supset\sigma$ must satisfy $\tau\in Z((j+\ell,H+1-j-\ell)_{(u,v)})$ for some $0\leq \ell\leq H+1-k$. Therefore by Lemma~\ref{lem:weightdef},
  \begin{align*}
    \frac{m(\sigma)}{w_G(\{u,v\})}
    &= \frac{(H+1-k)!\sum_{\ell=0}^{H+1-k}\sum_{\sigma\subset\tau\in Z((j+\ell,H+1-j-\ell)_{(u,v)})}m(\tau)}{w_G(\{u,v\})} \\
    &= (H+1-k)!\sum_{\ell=0}^{H+1-k}{s-k\choose \ell}{s-k-\ell\choose H+1-k-\ell}\cdot\frac{1}{{H-1\choose j+\ell-1}},
  \end{align*}
  where the final equality holds because there are exactly ${s-k\choose \ell}{s-k-\ell\choose H+1-k-\ell}$ elements $\tau\in Z((j+\ell,H+1-j-\ell)_{(u,v)})$ such that $\tau\supset\sigma$, and for each one $m(\tau)=w_{(u,v)}^{(j+\ell,H+1-j-\ell)}=w_G(\{u,v\})/{H-1\choose j+\ell-1}$ by definition.
\end{proof}

\subsection{Relative weights of overlapping faces}
\label{sec:relweights}
The proposition below determines the relative weights of faces of $Z$ that intersect at all but one of their vertices. This result is used in Section~\ref{sec:localexp} to determine the local expansion of $Z$.

\begin{proposition}
  \label{prop:weights}
  For all $1\leq k\leq H$, $1\leq j\leq k-1$, and $\{u,v\}\in E(G)$, it holds that $$\frac{w_{(u,v)}^{(j+1,k-j)}}{w_{(u,v)}^{(j,k-j+1)}}=\frac{j}{k-j}.$$ Furthermore, $$\frac{w_{(u)}^{(k+1)}}{\sum_{v\in N(u)}w_{(u,v)}^{(k,1)}}=k\sum_{i=k+1}^H\frac{1}{i}.$$
\end{proposition}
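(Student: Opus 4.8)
The plan is to compute both ratios directly from Lemma~\ref{lem:weightratio}, which expresses $w_{(u,v)}^{(j,k-j)}/w_G(\{u,v\})$ as an explicit sum over the possible ways an $H$-face extends a $(k-1)$-face. Since both sides of each claimed identity are ratios of weights on faces over the same edge, the common factor $w_G(\{u,v\})$ cancels, and the problem reduces to a pure binomial-coefficient identity. For the first claim, I would write $w_{(u,v)}^{(j+1,k-j)}/w_G(\{u,v\})$ and $w_{(u,v)}^{(j,k-j+1)}/w_G(\{u,v\})$ using the lemma — note both are weights of $(k-1)$-dimensional faces, so the same value $k$ appears as the dimension-plus-one parameter in Lemma~\ref{lem:weightratio} — obtaining sums $(H-k)!\sum_\ell \binom{s-k}{\ell}\binom{s-k-\ell}{H-k-\ell}\binom{H-1}{j+\ell}^{-1}$ and the analogous sum with $\binom{H-1}{j+\ell-1}^{-1}$ in the denominator. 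The key step is then the term-by-term identity $\binom{H-1}{j+\ell}^{-1} \big/ \binom{H-1}{j+\ell-1}^{-1} = (j+\ell)/(H-j-\ell)$, which does not immediately give a constant ratio; instead I would pair up terms or, more cleanly, reindex. Actually the cleanest route: use the absorption identity to rewrite $1/\binom{H-1}{j+\ell} = \frac{H-j-\ell}{H-1}\cdot\frac{1}{\binom{H-2}{j+\ell-1}}\cdot\text{(adjust)}$ — but this threatens to get messy, so I would instead verify the first identity by directly manipulating the ratio using the relation between $w^{(j,k-j)}$ and $w^{(j+1,k-j)}$ coming from a \emph{single} down-step rather than from Lemma~\ref{lem:weightratio}: a $(k-1)$-face $\sigma$ in $Z((j+1,k-j)_{(u,v)})$ and one in $Z((j,k-j+1)_{(u,v)})$ both sit below $k$-faces, and counting which $k$-faces in $Z((j+1,k-j+1)_{(u,v)})$ contain each (there are $k-j$ and $j$ respectively, by removing one of the appropriately-colored vertices) together with the balanced-weight property $m(\sigma) = \sum_{\sigma\subset\tau}m(\tau)$ gives $w_{(u,v)}^{(j+1,k-j)}\cdot(\text{something}) $ versus $w_{(u,v)}^{(j,k-j+1)}\cdot(\text{something})$ directly. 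I would carry out whichever of these two bookkeeping arguments is shorter.

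For the second claim, the denominator $\sum_{v\in N(u)} w_{(u,v)}^{(k,1)}$ sums over all neighbors $v$ of $u$ the weight of a $k$-dimensional face lying in $\{u\}\times[s]$ except for one vertex in $\{v\}\times[s]$; the numerator $w_{(u)}^{(k+1)}$ is the weight of a $k$-dimensional face entirely inside $\{u\}\times[s]$. By the balanced weight function property applied to a face $\sigma\in Z((k+1)_u)$, we have $m(\sigma) = \sum_{\sigma\subset\tau\in Z(k+1)}m(\tau)$, and such $\tau$ either remain inside $\{u\}\times[s]$ (contributing to $Z((k+2)_u)$) or pick up one vertex over some neighbor $v$ (contributing to $Z((k+1,1)_{(u,v)})$). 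Iterating this down-propagation — or more efficiently, directly applying Lemma~\ref{lem:weightdef} to express both $w_{(u)}^{(k+1)}$ and each $w_{(u,v)}^{(k,1)}$ as weighted sums of $H$-face weights $w_G(\{u,v\})/\binom{H-1}{\cdot}$ and summing over $v\in N(u)$ (which replaces each $w_G(\{u,v\})$ by the vertex weight $\sum_{v\in N(u)} w_G(\{u,v\})$, which cancels) — reduces this to another explicit binomial-coefficient sum, and the target value $k\sum_{i=k+1}^H \frac1i$ must emerge. The harmonic-sum shape strongly suggests the identity $\sum_{\ell} \binom{m}{\ell}\binom{m-\ell}{r-\ell}\big/\binom{H-1}{\ell+c}$-type expression telescopes via $\frac{1}{\binom{H-1}{p}} - \frac{1}{\binom{H-1}{p+1}}$-style differences, or via the integral representation $1/\binom{H-1}{p} = (H)\int_0^1 t^p(1-t)^{H-1-p}\,dt$; I would use the Beta-integral representation to turn the sums into integrals of simple polynomials, evaluate, and match.

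\textbf{Main obstacle.} The conceptual content is light — everything follows from the balanced weight property and Lemma~\ref{lem:weightdef}/\ref{lem:weightratio} — so the real difficulty is purely the binomial-coefficient bookkeeping, especially for the second identity where a harmonic sum $k\sum_{i=k+1}^H \frac1i$ must be extracted from a double sum of ratios of binomial coefficients. I expect the Beta-integral trick $\binom{H-1}{p}^{-1} = H\int_0^1 t^p(1-t)^{H-1-p}\,dt$ to be the cleanest way through, converting the alternating/telescoping structure into a transparent polynomial integral; the risk is an off-by-one error in the range of $i$ or a misidentified counting multiplicity, which I would guard against by checking the smallest case $k=H$ (where the harmonic sum is empty, forcing $w_{(u)}^{(H+1)} = 0$, consistent with $Z$ being $H$-dimensional so there are no faces of dimension $H+1$) and $k = H-1$ by hand.
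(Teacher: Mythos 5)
Your proposal is essentially correct but takes a genuinely different route from the paper. The paper proves both identities by downward induction on $k$: the base case $k=H$ is the definition of $m$ (and $w_{(u)}^{(H+1)}=0$), and the inductive step uses only the balanced-weight recursion $w_{(u,v)}^{(j+1,k-j)}=(s-k-1)\bigl(w_{(u,v)}^{(j+2,k-j)}+w_{(u,v)}^{(j+1,k-j+1)}\bigr)$ (and its analogue for faces over a single vertex), so no binomial identities ever appear. Your route instead evaluates the weights in closed form, and it does go through: writing $\binom{s-k-1}{\ell}\binom{s-k-1-\ell}{H-k-\ell}=\binom{s-k-1}{H-k}\binom{H-k}{\ell}$ and using $\binom{H-1}{p}^{-1}=H\int_0^1 t^p(1-t)^{H-1-p}\,dt$, the sums collapse to Beta integrals, giving $w_{(u,v)}^{(j,k-j)}\propto (j-1)!(k-1-j)!$ at each level (whence the ratio $j/(k-j)$), and for the second claim a numerator $H\sum_{i=k+1}^{H}1/i$ against a denominator $H/k$, exactly the harmonic sum you predicted. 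This buys explicit formulas for all interior weights, at the cost of the bookkeeping the paper's induction avoids. Two caveats. First, your alternative ``single down-step'' argument for the first identity does not close as sketched: the balanced-weight identity for $\sigma\in Z((j+1,k-j)_{(u,v)})$ also involves faces in $Z((j+2,k-j)_{(u,v)})$, so comparing the two orbits only through the common class $Z((j+1,k-j+1)_{(u,v)})$ is insufficient without a recursion (and the containment multiplicities there are $k-j+1$ and $j+1$, not $k-j$ and $j$); you would end up reproducing the paper's induction. Second, Lemma~\ref{lem:weightratio} covers only $1\leq j\leq k-1$, so $w_{(u)}^{(k+1)}$ must indeed be expanded via Lemma~\ref{lem:weightdef} directly, keeping in mind that every top face over $u$ contains at least one vertex over a neighbor (this truncation of the sum is precisely where the harmonic sum originates); your outline handles both points correctly.
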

\begin{proof}
  Both statements are shown using induction. For the first equality, the base case $k=H$ follows by the definition of $w_{(u,v)}^{(j,H+1-j)}=w_G(\{u,v\})/{H-1\choose j-1}$, so that $$\frac{w_{(u,v)}^{(j+1,H-j)}}{w_{(u,v)}^{(j,H-j+1)}}=\frac{{H-1\choose j-1}}{{H-1\choose j}}=\frac{j}{H-j}.$$ For the inductive step, assume for some $1\leq k\leq H-1$ that it holds for all $1\leq j\leq k$ that $w_{(u,v)}^{(j+1,k+1-j)}/w_{(u,v)}^{(j,k+2-j)}=j/(k+1-j).$ For any $1\leq j\leq k-1$ and any $\sigma\in Z((j+1,k-j)_{(u,v)})$, by definition any $(k+1)$-dimensional face $\tau\supset\sigma$ is obtained from $\sigma$ by adding either a vertex in $\{u\}\times([s]\setminus\Pi_{[s]}(\sigma))$ or in $\{v\}\times([s]\setminus\Pi_{[s]}(\sigma))$. Therefore
  \begin{align}
    \label{eq:weightup}
    \begin{split}
      w_{(u,v)}^{(j+1,k-j)} = m(\sigma)
      &= \sum_{\sigma\subset\tau\in Z(k+1)}m(\tau) \\
      &= \sum_{\sigma\subset\tau\in Z((j+2,k-j)_{(u,v)})}m(\tau)+\sum_{\sigma\subset\tau\in Z((j+1,k-j+1)_{(u,v)})}m(\tau) \\
      &= (s-k-1)(w_{(u,v)}^{(j+2,k-j)}+w_{(u,v)}^{(j+1,k-j+1)}).
    \end{split}
  \end{align}
  Applying the exact same reasoning to a face $\sigma'\in Z((j,k-j+1)_{(u,v)}$ gives that $$w_{(u,v)}^{(j,k-j+1)}=m(\sigma')=(s-k-1)(w_{(u,v)}^{(j+1,k-j+1)}+w_{(u,v)}^{(j,k-j+2)}).$$ Therefore
  \begin{align*}
    \frac{w_{(u,v)}^{(j+1,k-j)}}{w_{(u,v)}^{(j,k-j+1)}}
    &= \frac{(s-k-1)(w_{(u,v)}^{(j+2,k-j)}+w_{(u,v)}^{(j+1,k-j+1)})}{(s-k-1)(w_{(u,v)}^{(j+1,k-j+1)}+w_{(u,v)}^{(j,k-j+2)})} \\
    &= \frac{w_{(u,v)}^{(j+2,k-j)}/w_{(u,v)}^{(j+1,k-j+1)}+1}{1+w_{(u,v)}^{(j,k-j+2)}/w_{(u,v)}^{(j+1,k-j+1)}} \\
    &= \frac{(j+1)/(k-j)+1}{1+(k-j+1)/j} \\
    &= \frac{j}{k-j},
  \end{align*}
  completing the inductive step; note that the third equality above holds by the inductive hypothesis.

  To show the second equality in the proposition statement, first note that the base case $k=H$ holds immediately as $w_{(u)}^{(H+1)}=0$ because the definition of the complex $Z$ does not include, or equivalently assigns zero weight, to faces in $Z((H+1)_{(u)})$. For the inductive step, assume that for some $1\leq k\leq H-1$ it holds that $w_{(u)}^{(k+2)}/\sum_{v\in N(u)}w_{(u,v)}^{(k+1,1)}=(k+1)\sum_{i=k+2}^H\frac{1}{i}.$ For any $\sigma\in Z((k+1)_{(u)})$, by definition any $(k+1)$-dimensional face $\tau\supset\sigma$ is obtained from $\sigma$ by adding either a vertex in $\{u\}\times([s]\setminus\Pi_{[s]}(\sigma))$ or in $\{v\}\times([s]\setminus\Pi_{[s]}(\sigma))$ for some $v\in N(u)$. Therefore
  \begin{align*}
    w_{(u)}^{(k+1)} = m(\sigma)
    &= \sum_{\sigma\subset\tau\in Z(k+1)}m(\tau) \\
    &= \sum_{\sigma\subset\tau\in Z((k+2)_{(u)})}m(\tau) + \sum_{v\in N(u)}\sum_{\sigma\subset\tau\in Z((k+1,1)_{(u,v)})}m(\tau) \\
    &= (s-k-1)\left(w_{(u)}^{(k+2)}+\sum_{v\in N(u)}w_{(u,v)}^{(k+1,1)}\right).
  \end{align*}
  Applying~(\ref{eq:weightup}) with $j=k-1$ gives that $$\sum_{v\in N(u)}w_{(u,v)}^{(k,1)}=(s-k-1)\left(\sum_{v\in N(u)}w_{(u,v)}^{(k+1,1)}+\sum_{v\in N(u)}w_{(u,v)}^{(k,2)}\right).$$ Therefore
  \begin{align*}
    \frac{w_{(u)}^{(k+1)}}{\sum_{v\in N(u)}w_{(u,v)}^{(k,1)}}
    &= \frac{(s-k-1)\left(w_{(u)}^{(k+2)}+\sum_{v\in N(u)}w_{(u,v)}^{(k+1,1)}\right)}{(s-k-1)\left(\sum_{v\in N(u)}w_{(u,v)}^{(k+1,1)}+\sum_{v\in N(u)}w_{(u,v)}^{(k,2)}\right)} \\
    &= \frac{w_{(u)}^{(k+2)}/\left(\sum_{v\in N(u)}w_{(u,v)}^{(k+1,1)}\right)+1}{1+\left(\sum_{v\in N(u)}w_{(u,v)}^{(k,2)}\right)/\left(\sum_{v\in N(u)}w_{(u,v)}^{(k+1,1)}\right)} \\
    &= \frac{(k+1)\sum_{i=k+2}^H1/i+1}{1+1/k} \\
    &= k\sum_{i=k+1}^H\frac{1}{i},
  \end{align*}
  completing the inductive step; note that the third equality above holds by the inductive hypothesis, and because $w_{(u,v)}^{(k,2)}=w_{(u,v)}^{(k+1,1)}/k$ for all $v\in N(u)$ as was shown above.
\end{proof}

Proposition~\ref{prop:weights} provides the key insight for understanding why the spectral gap of the up-down walk on $Z$ has a quadratic dependence in $k$, whereas that of $Q$ has an exponential dependence. For some $2\leq k\leq H$, $1\leq j\leq k-1$, $\{u,v\}\in E(G)$, consider an element $\sigma\in Z((j,k-j)_{(u,v)})$. Let $\sigma'\sim\Wud_{k-1}\1_\sigma$ be the random variable for the face obtained by taking one step in the up-down walk starting at $\sigma$. Then $\sigma'\in Z((j',k-j')_{(u,v)})$ for some $j'\in\{j+1,j,j-1\}$. Let $\Pi_{[s]}(\sigma)$ denote the subset of $[s]$ obtained by projecting the elements of $\sigma$ to their second components. If $j'=j+1$, then the up step must add some vertex in $\{u\}\times([s]\setminus\Pi_{[s]}(\sigma))$ and the down step must remove some vertex in $(\{v\}\times[s])\cap\sigma$, while if $j'=j-1$ then the up step must add some vertex in $\{v\}\times([s]\setminus\Pi_{[s]}(\sigma))$ and the down step must remove some vertex in $(\{u\}\times[s])\cap\sigma$. Thus by the definition of the up- and down-step transition probabilities,
\begin{align}
  \label{eq:transprobs}
  \begin{split}
    \Pr[j'=j+1] &= \frac{w_{(u,v)}^{(j+1,k-j)}}{w_{(u,v)}^{(j+1,k-j)}+w_{(u,v)}^{(j,k-j+1)}}\cdot\frac{k-j}{k+1} \\
    \Pr[j'=j-1] &= \frac{w_{(u,v)}^{(j,k-j+1)}}{w_{(u,v)}^{(j+1,k-j)}+w_{(u,v)}^{(j,k-j+1)}}\cdot\frac{j}{k+1}.
  \end{split}
\end{align}
For the construction $Q$ of Liu et al.~\cite{liu_high-dimensional_2020}, these same expressions hold, but $w_{(u,v)}^{(j+1,k-j)}=w_{(u,v)}^{(j,k-j+1)}$, so that when $j$ is close to $1$ or close to $k$, the transition probabilities in~(\ref{eq:transprobs}) are heavily skewed to push $j'$ in the direction of $k/2$. It is this property that results in an exponential dependence on $k$ in the $k$th order up-down walk on $Q$; the up-down walk becomes ``trapped'' in the set of faces contained in $\{u,v\}\times[s]$, with the transition probabilities pushing away from the ``exit routes'' $Z((k)_{(u)})$ and $Z((k)_{(v)})$.

To understand why the weight function $m$ on $Z$ resolves this issue, observe that for $Z$, Proposition~\ref{prop:weights} implies that both probabilities in~(\ref{eq:transprobs}) equal $\frac{j(k-j)}{k(k+1)}$. Therefore the events $j'=j+1$ and $j'=j-1$ are equally likely. Thus the up-down walk moves across the sets $Z((j,k-j)_{(u,v)})$ for $1\leq j\leq k-1$ as a lazy random walk on an unweighted, undirected, $(k-1)$-vertex path. The mixing time for such a walk grows quadratically in $k$, thereby providing intuition for the quadratic dependence in $k$ for the mixing time of $\Wud_{k-1}$.

The intuition described above can be formalized to bound the mixing time of the high-order walks on $Z$. However, the following section takes a different approach by computing the local expansion of $Z$, which leads to tighter bounds on $\nu_2(\Wud_k)$.

\section{Local and global expansion}
\label{sec:localexp}
This section analyzes the local and global expansion of $Z$, which is then used to bound the mixing time of the up-down random walk using Theorem~\ref{thm:localtomix}.

\begin{theorem}
  \label{thm:localexp}
  If $H\geq 2$, $s\geq 2H$, and $n\geq 4$, then for every $0\leq k\leq H-2$, $$\nu^{(k)}(Z)=\frac{k+1}{k+2}.$$ Furthermore, $$\nu^{(-1)}(Z)=\frac{\nu_2(G)}{\sum_{\ell=1}^H1/\ell}\geq\frac{\nu_2(G)}{1+\log H}.$$
\end{theorem}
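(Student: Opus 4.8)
The plan is to analyze the two cases ($k \geq 0$ and $k = -1$) separately, in each case identifying the 1-skeleton of the relevant link as a weighted graph whose spectral gap can be computed directly, using the weight ratios established in Proposition~\ref{prop:weights} and Lemma~\ref{lem:weightratio}.

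For the case $0 \leq k \leq H-2$, I would fix a $k$-dimensional face $\sigma \in Z(k)$ and describe the link $Z_\sigma$. A face $\sigma$ has the form $\{(v_1,b_1),\dots,(v_{k+1},b_{k+1})\}$ with its vertices projecting into a single edge $\{u,v\} \in E(G)$ (or possibly a single vertex $u$, if all $v_i$ coincide). I expect the key observation to be that the link's 1-skeleton is (up to the degenerate cases) a complete bipartite-like graph on the available ``slots'': the vertices of $Z_\sigma(0)$ are pairs $(u, b)$ and $(v, b)$ for $b \in [s] \setminus \Pi_{[s]}(\sigma)$ (plus, when relevant, $(v', b)$ for neighbors $v'$ of $u$), and two such vertices are joined iff together with $\sigma$ they form a face in $Z(k+2)$. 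Using Proposition~\ref{prop:weights}, the edge weights in this link graph depend only on which ``type'' $(j,k+2-j)$ the resulting face has, and the ratio $w^{(j+1,k+1-j)}_{(u,v)}/w^{(j,k+2-j)}_{(u,v)} = j/(k+1-j)$ is exactly what makes the weighted link graph have a clean spectrum. I would compute the normalized adjacency operator of this weighted graph explicitly and show $\omega_2 = \frac{1}{k+2}$, hence $\nu^{(k)}(Z) = \frac{k+1}{k+2}$; the conditions $s \geq 2H$ and $n \geq 4$ should be exactly what is needed to ensure enough free slots / enough neighbors so the link graph is connected and the generic structure holds. The degenerate case where $\sigma$ projects to a single vertex $u$ would be handled using the second identity of Proposition~\ref{prop:weights}, which governs the relative weight of the ``exit'' faces $Z((k)_u)$.

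For the global expansion $\nu^{(-1)}(Z)$, the link $Z_\emptyset$ is just $Z$ itself, so I need the spectral gap of the 1-skeleton $(Z(0), Z(1), m)$. Here the vertex set is $V(G) \times [s]$, and I expect the weighted graph to be essentially a blow-up / tensor-type construction over $G$: a vertex $(u,b)$ is joined to $(u,b')$ for $b' \neq b$ (a ``clique'' within each fiber) and to $(v, b')$ for $\{u,v\} \in E(G)$, with weights determined by $w^{(1,1)}_{(u,v)}$, $w^{(2)}_u$, etc., via Lemma~\ref{lem:weightratio}. The crucial point should be that the ratio $w^{(j,k-j)}_{(u,v)}/w_G(\{u,v\})$ is independent of the edge (Lemma~\ref{lem:weightratio}), so the weighted graph is a genuine ``product-like'' object and its random walk operator decomposes as a combination of the random walk on $G$ and the random walk on the fiber cliques $\cK_{[s]}$. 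I would diagonalize accordingly: the eigenvalue $\omega_2$ of the 1-skeleton should come out as a convex combination where the contribution of $\nu_2(G)$ gets scaled by the ratio of ``inter-fiber weight'' to ``total weight'' at a vertex, and this scaling factor is precisely $1/\sum_{\ell=1}^H 1/\ell$ — computed from the second identity in Proposition~\ref{prop:weights} with $k=1$, which gives $w^{(2)}_u / \sum_{v \in N(u)} w^{(1,1)}_{(u,v)} = \sum_{i=2}^H 1/i$, so the ``clique'' part of the vertex weight is $\sum_{i=2}^H 1/i$ times the ``edge'' part. The bound $\sum_{\ell=1}^H 1/\ell \leq 1 + \log H$ is then immediate.

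The main obstacle I anticipate is the bookkeeping in the first case: correctly setting up the weighted link graph, handling the boundary behavior when $\sigma$ lies over a single vertex versus an edge, and verifying that the normalized adjacency operator really has second eigenvalue exactly $\frac{1}{k+2}$ rather than merely bounded by it (the theorem claims equality). This likely requires exhibiting both the eigenvector achieving $\omega_2 = \frac{1}{k+2}$ and an argument that no eigenvalue exceeds it — probably by recognizing the link graph as a weighted join of cliques, whose spectrum is classical. The second case should be more routine once the product structure is made precise, since tensor/blow-up spectral decompositions are standard; the only subtlety is tracking the normalization constants to land on exactly $\nu_2(G)/\sum_{\ell=1}^H 1/\ell$.
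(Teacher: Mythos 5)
Your proposal follows essentially the same route as the paper: the paper computes each link's 1-skeleton explicitly as a tensor product with a clique (a weighted two-vertex graph for faces over an edge of $G$, a weighted star graph for faces over a single vertex, and for the global case a lazy version of the walk on $G$ tensored with $K_{[s]}$), pins down the walk matrices via Proposition~\ref{prop:weights} and Lemma~\ref{lem:weightratio}, and reads off the spectra exactly as you outline, including the separate treatment of the single-vertex case and the laziness factor $1-1/\sum_{\ell=1}^H 1/\ell$. The only correction worth noting is the role of the hypotheses: $s\geq 2H$ is not about connectivity but ensures the clique factor's negative eigenvalue $-1/(s-k-1)$ has absolute value at most $1/(k+1)$ so it cannot raise $\omega_2$ in the star-link case, and $n\geq 4$ (with $H\geq 2$) is used in the global case to rule out the competing eigenvalue coming from the product of the most negative eigenvalue of the lazy walk on $G$ with the $-1/(s-1)$ eigenvalue of $K_{[s]}$.
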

Note that the local expansion $\nu^{(k)}(Z)$ for $k\geq 0$ does not depend on $G$. This property stems from the fact that the structure of any given link in $Z$ depends only on the local structure of $G$, that is, on the weights of edges adjacent to a single vertex.

For comparison, the construction $Q$ of Liu et al.~\cite{liu_high-dimensional_2020} has local expansion $\nu^{(k)}(Q)=\frac12$ in each dimension $k\geq 0$, and has global expansion $\nu^{(-1)}(Q)$ approaching $\frac12\nu_2(G)$ as $H$ grows large. It was posed as an open question in Liu et al.~\cite{liu_high-dimensional_2020} whether $\frac12$ is a natural barrier for local expansion in graph-product-based constructions. Theorem~\ref{thm:localexp} gives a partial answer to this question, as although $Z$ sacrifices a factor of $O(\log H)$ in global expansion compared to $Q$, for all $k\geq 1$ the local expansion $\nu^{(k)}(Z)=\frac{k+1}{k+2}$ is an improvement on $\nu^{(k)}(Q)=\frac12$. Section~\ref{sec:conc} provides a discussion suggesting that further improvements in local expansion are not attainable with a similar graph-product-based construction.

The result below applies Theorem~\ref{thm:localtomix}, shown by Alev and Lau~\cite{alev_improved_2020}, to show that the improvement in local expansion of $Z$ compared to $Q$ for $k\geq 1$ results in an exponential improvement with respect to $k$ of the spectral gap of the $k$th order up-down walk.
\begin{corollary}
  \label{cor:localapp}
  Let $\Wud_k$ be the up-down walk operator for the simplicial complex $Z$. If $H\geq 2$, $s\geq 2H$, and $n\geq 4$, then for all $0\leq k\leq H-1$, $$\nu_2(W_k^{\ud}) \geq \frac{\nu_2(G)}{(\sum_{\ell=1}^H1/\ell)(k+2)(k+1)} \geq \frac{\nu_2(G)}{(1+\log H)(k+2)(k+1)}.$$
\end{corollary}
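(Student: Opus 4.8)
The plan is to combine the local expansion computation in Theorem~\ref{thm:localexp} with the local-to-global result of Alev and Lau, Theorem~\ref{thm:localtomix}, which is precisely the kind of ``plug in'' that the corollary is designed to be. So Corollary~\ref{cor:localapp} should follow almost immediately, modulo verifying that the hypotheses of Theorem~\ref{thm:localtomix} are met and that the telescoping product of local expansions simplifies as claimed.

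First I would invoke Theorem~\ref{thm:localtomix} applied to $X = Z$, which gives for every $0 \leq k \leq H-1$ the bound
\[
  \nu_2(\Wud_k) \;\geq\; \frac{1}{k+2}\prod_{j=-1}^{k-1}\nu^{(j)}(Z).
\]
Next I would substitute the values from Theorem~\ref{thm:localexp}: the $j = -1$ term contributes $\nu^{(-1)}(Z) = \nu_2(G)/\sum_{\ell=1}^H \tfrac{1}{\ell}$, and for $0 \leq j \leq k-1$ each term contributes $\nu^{(j)}(Z) = \tfrac{j+1}{j+2}$. The product $\prod_{j=0}^{k-1}\tfrac{j+1}{j+2}$ telescopes to $\tfrac{1}{k+1}$. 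Therefore
\[
  \prod_{j=-1}^{k-1}\nu^{(j)}(Z) \;=\; \frac{\nu_2(G)}{\left(\sum_{\ell=1}^H \tfrac{1}{\ell}\right)(k+1)},
\]
and multiplying by $\tfrac{1}{k+2}$ yields the first claimed inequality. The second inequality is just the elementary estimate $\sum_{\ell=1}^H \tfrac{1}{\ell} \leq 1 + \log H$ (integral comparison), which was already noted in Theorem~\ref{thm:localexp}.

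One routine point to check is the edge case $k = 0$: then the product $\prod_{j=-1}^{-1}\nu^{(j)}(Z)$ is just $\nu^{(-1)}(Z)$, the telescoping product over $0 \leq j \leq k-1$ is empty and equals $1$, so the formula still gives $\nu_2(\Wud_0) \geq \nu_2(G)/\big((\sum_\ell 1/\ell)\cdot 2\big)$, consistent with $(k+2)(k+1) = 2$. I should also confirm that Theorem~\ref{thm:localexp} does in fact supply $\nu^{(j)}(Z)$ for all the required $j$, namely $-1 \leq j \leq H-2$ — and the constraint $k \leq H-1$ in the corollary means the largest index appearing in the product is $j = k-1 \leq H-2$, so all needed values are covered. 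The hypotheses $H \geq 2$, $s \geq 2H$, $n \geq 4$ are exactly those of Theorem~\ref{thm:localexp}, so they carry over directly. There is no real obstacle here; the entire content of the corollary is front-loaded into Theorem~\ref{thm:localexp}, whose proof (the genuine work) computes the link spectra — the corollary itself is a short deduction.
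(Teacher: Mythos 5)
Your proposal is correct and follows essentially the same route as the paper: apply Theorem~\ref{thm:localtomix} to $Z$, substitute the local expansion values from Theorem~\ref{thm:localexp}, telescope $\prod_{j=0}^{k-1}\frac{j+1}{j+2}=\frac{1}{k+1}$, and finish with $\sum_{\ell=1}^H 1/\ell\leq 1+\log H$. Your extra checks (the $k=0$ edge case and confirming $k-1\leq H-2$) are fine but not needed beyond what the paper's one-line computation already contains.
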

\begin{proof}
  Applying Theorem~\ref{thm:localtomix} with Theorem~\ref{thm:localexp} gives that
  \begin{align*}
    \nu_2(W_k^{\ud})
    &\geq \frac{1}{k+2}\cdot\frac{\nu_2(G)}{\sum_{\ell=1}^H1/\ell}\prod_{j=0}^{k-1}\frac{j+1}{j+2}
      = \frac{\nu_2(G)}{(\sum_{\ell=1}^H1/\ell)(k+2)(k+1)}.
  \end{align*}
\end{proof}
Because by definition $W_k^{\ud}$ has self loop probabilities of $1/(k+2)$, for all $i$ it holds that $\omega_i(W_k^{\ud})\geq-k/(k+2)$. Therefore assuming that $G$ is chosen from a family of graphs with bounded ratio of maximum degree to minimum degree, then the mixing time of the random walk $W_k^{\ud}$ grows as $O(\frac{k^2(\log H)(\log n)}{\nu_2(G)})$.

In contrast, the spectral gap of the $k$th order random walk on the construction $Q$ of Liu et al.~\cite{liu_high-dimensional_2020} was shown in Alev and Lau~\cite{alev_improved_2020} to grow as $\Omega(\frac{\nu_2(G)}{k2^k})$, for a mixing time of $O(\frac{k2^k\log n}{\nu_2(G)})$.

\subsection{Proof of Theorem~\ref{thm:localexp}}
To prove Theorem~\ref{thm:localexp}, we first compute the expansion of the 1-skeleton of every link in $Z$ in the following lemmas.
\begin{lemma}
  \label{lem:inlocal}
  For every $2\leq k\leq H-1$, $1\leq j\leq k-1$, and $\{u,v\}\in E(G)$, every face $\sigma\in Z((j,k-j)_{(u,v)})$ satisfies $$\omega_2(Z_\sigma(0),Z_\sigma(1),m_\sigma)=\frac{1}{k+1}.$$
\end{lemma}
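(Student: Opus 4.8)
The plan is to identify the 1-skeleton of the link $Z_\sigma$ explicitly as a concrete weighted graph and then read off its second eigenvalue. Fix $\sigma\in Z((j,k-j)_{(u,v)})$ with $1\le j\le k-1$, so $\sigma$ contains $j$ vertices over $u$ and $k-j$ vertices over $v$, and let $B=\Pi_{[s]}(\sigma)\subset[s]$ be the set of second coordinates used, with $|B|=k$. A vertex of $Z_\sigma$ is any $(w,b)$ with $w\in\{u,v\}$ and $b\in[s]\setminus B$ such that $\sigma\cup\{(w,b)\}\in Z(k)$; since $u,v$ already both appear in $\sigma$ and $k\le H-1<H+1$, every such pair is allowed. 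So $Z_\sigma(0)=\{u,v\}\times([s]\setminus B)$, which has $2(s-k)$ vertices. Two such vertices $(w,b),(w',b')$ form an edge of $Z_\sigma$ iff $b\ne b'$ (they must have distinct second coordinates to lie in a common face); note $w=w'$ is permitted. Thus the underlying graph is the complete multipartite-type graph on $2(s-k)$ vertices where the only non-edges are the $(s-k)$ pairs $\{(u,b),(v,b)\}$ for $b\notin B$ — i.e. the complement of a perfect matching, a "cocktail party"-like graph but with the two endpoints of each missing edge lying over $u$ and $v$ rather than being "antipodal" in the usual cocktail-party sense. The only subtlety is the edge weights: $m_\sigma(\{(w,b),(w',b')\})=m(\sigma\cup\{(w,b),(w',b')\})$, which by Proposition~\ref{prop:weights} (and the weight notation $w_{(u,v)}^{(\cdot,\cdot)}$) depends only on how many of the $k+2$ vertices of $\sigma\cup\{(w,b),(w',b')\}$ lie over $u$. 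The three cases are: both new vertices over $u$ (weight $w_{(u,v)}^{(j+2,k-j)}$), one over each (weight $w_{(u,v)}^{(j+1,k-j+1)}$), both over $v$ (weight $w_{(u,v)}^{(j,k-j+2)}$).

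The key computation is then the following. First I would use the weighted degrees: every vertex $(u,b)$ has the same weighted degree — call it $d_u$ — and every $(v,b)$ has weighted degree $d_v$, by the symmetry of Lemma~\ref{lem:permact}. The random walk matrix $W=M_{Z_\sigma}D_{Z_\sigma}^{-1}$ acts on functions on $Z_\sigma(0)$. I would guess eigenvectors using the symmetry: functions that are constant on $\{u\}\times([s]\setminus B)$ and constant on $\{v\}\times([s]\setminus B)$ span a 2-dimensional invariant subspace containing the all-ones (eigenvalue $1$) vector; this yields one more eigenvalue, which I expect to work out to something like $1/(k+1)$ or close, and must be checked to be $\le 1/(k+1)$ in absolute value. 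The remaining eigenvectors are supported on functions summing to zero on each "column" $\{(u,b),(v,b)\}$ or orthogonal within the $u$-block and $v$-block; because the graph is so close to complete, these give small (and in fact equal) eigenvalues. Concretely, a function $f$ with $f(u,b)=-f(v,b)=x_b$ and $\sum_b x_b=0$, or $f$ supported on the $u$-block with $\sum_b f(u,b)=0$: applying $W$ to these, the "complete graph" part contributes $0$ after the mean-zero cancellation, and only the missing-matching correction survives, producing an eigenvalue I expect to be exactly $1/(k+1)$ — this is what makes $\omega_2=1/(k+1)$. The arithmetic here is where Proposition~\ref{prop:weights}'s relation $w^{(j+1,k-j)}/w^{(j,k-j+1)}=j/(k-j)$ gets used to collapse the weight ratios into clean numbers.

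The main obstacle I anticipate is bookkeeping: correctly enumerating the eigenspaces of this not-quite-standard weighted graph and verifying that the "structured" eigenvalue coming from the 2-dimensional symmetric subspace is dominated by (or equal to) $1/(k+1)$, so that $\omega_2$ is genuinely $1/(k+1)$ and not something larger. A clean way to sidestep some of this is to observe that $Z_\sigma$ is itself (up to rescaling weights) the link of a $0$-dimensional face in a smaller instance of the same construction, or more simply to note that by Proposition~\ref{prop:localdown} applied within this link, once the top-dimensional links are pinned down the lower ones are forced; but since here all links of this shape have the same dimension-type, the direct eigenvalue computation is cleanest. I would set it up by writing $W$ in the $2\times 2$ block form indexed by $u$ vs.\ $v$, with each block a combination of (weighted) all-ones and identity and "anti-matching" matrices, diagonalize the all-ones/identity part first, then handle the rank-$(s-k)$ matching correction, and finally take the maximum of the non-Perron eigenvalues.
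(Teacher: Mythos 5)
Your setup of the link is correct (vertex set $\{u,v\}\times([s]\setminus\Pi_{[s]}(\sigma))$, edges exactly the pairs with distinct second coordinates, and the three possible weights), and your block decomposition is really the paper's tensor-product factorization $P\otimes K_{[s]\setminus\Pi_{[s]}(\sigma)}$ in disguise. However, the key computational claim in your plan is backwards, and as stated the argument would not deliver the asserted equality. The eigenvalue $\frac{1}{k+1}$ does \emph{not} come from the mean-zero (``missing-matching correction'') sector: functions of the form $f(u,b)=p_u x_b$, $f(v,b)=p_v x_b$ with $\sum_b x_b=0$ pick up the complete-graph eigenvalue $-\frac{1}{s-k-1}$ of $K_{[s]\setminus\Pi_{[s]}(\sigma)}$, so that entire sector contributes only the negative eigenvalues $-\frac{1}{s-k-1}$ and $-\frac{1}{(k+1)(s-k-1)}$. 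The eigenvalue $\frac{1}{k+1}$ lives precisely in the two-dimensional block-constant subspace that you propose only to bound by $\frac{1}{k+1}$: using Proposition~\ref{prop:weights}, the ratios $w_{(u,v)}^{(j+2,k-j)}/w_{(u,v)}^{(j+1,k-j+1)}=(j+1)/(k-j)$ and $w_{(u,v)}^{(j+1,k-j+1)}/w_{(u,v)}^{(j,k-j+2)}=j/(k-j+1)$ show that the induced $2\times 2$ walk matrix is $\frac{1}{k+1}\begin{pmatrix}j+1&j\\k-j&k-j+1\end{pmatrix}$, whose non-Perron eigenvalue is exactly $\frac{1}{k+1}$ (eigenvector $(1,-1)^{T}$). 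Since all remaining eigenvalues are negative, this is where $\omega_2=\frac{1}{k+1}$ comes from; if you only verified ``$\le\frac{1}{k+1}$'' on the block-constant part and relied on the mean-zero part for the value $\frac{1}{k+1}$, you would prove neither the lower nor the upper half of the stated equality.

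Two smaller points: for the column-normalized walk matrix $W_G=M_GD_G^{-1}$ used in the paper, the eigenvalue-$1$ right eigenvector is the degree (stationary) vector, not the all-ones vector, though this does not affect the eigenvalues; and once you recognize the graph as $P\otimes K_{[s]\setminus\Pi_{[s]}(\sigma)}$, the full spectrum is just the set of products of the two spectra, which replaces all of the eigenspace bookkeeping you anticipate. With the $2\times 2$ computation done as above and the observation that $1$ is the only positive eigenvalue of $W_{K_{[s]\setminus\Pi_{[s]}(\sigma)}}$, your outline becomes the paper's proof.
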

\begin{proof}
  Following the method of Liu et al.~\cite{liu_high-dimensional_2020}, the proof will proceed by identifying the 1-skeleton of each link of $Z$ with a tensor product of two other graphs, whose spectra can be analyzed independently. For a face $\sigma\in Z((j,k-j)_{(u,v)})$, the link $Z_\sigma$ by definition has vertex set $Z_\sigma(0)=\{u,v\}\times([s]\setminus\Pi_{[s]}(\sigma))$ and edge set $$Z_\sigma(1)=\{\tau\setminus\sigma:\sigma\subset\tau\in Z(k+1)\}=\{\{(v_1,b_1),(v_2,b_2)\}\subset Z_\sigma(0):b_1\neq b_2\}.$$ For such an edge $\{(v_1,b_1),(v_2,b_2)\}$, let $\tau=\sigma\cup\{(v_1,b_1),(v_2,b_2)\}$, so that the edge's weight $m(\tau)$ may be one of three possible values:
  \begin{itemize}
  \item If $v_1=v_2=u$, then $\tau\in Z((j+2,k-j)_{(u,v)})$, so $m(\tau)=w_{(u,v)}^{(j+2,k-j)}$.
  \item If $v_1=u,v_2=v$, then $\tau\in Z((j+1,k-j+1)_{(u,v)})$, so $m(\tau)=w_{(u,v)}^{(j+1,k-j+1)}$.
  \item If $v_1=v_2=v$, then $\tau\in Z((j,k-j+2)_{(u,v)})$, so $m(\tau)=w_{(u,v)}^{(j,k-j+2)}$.
  \end{itemize}
  Therefore letting $P$ be the 2-vertex graph with adjacency matrix $$M_P=\begin{pmatrix}w_{(u,v)}^{(j+2,k-j)}&w_{(u,v)}^{(j+1,k-j+1)}\\w_{(u,v)}^{(j+1,k-j+1)}&w_{(u,v)}^{(j,k-j+2)}\end{pmatrix},$$ then the graph $(Z_\sigma(0),Z_\sigma(1),m_\sigma)$ described above is exactly given by $P\otimes K_{[s]\setminus\Pi_{[s]}(\sigma)}$, where $K_V$ denotes the complete graph without self-loops on vertex set $V$. By Proposition~\ref{prop:weights}, it holds that $w_{(u,v)}^{(j+2,k-j)}/w_{(u,v)}^{(j+1,k-j+1)}=(j+1)/(k-j)$ and $w_{(u,v)}^{(j+1,k-j+1)}/w_{(u,v)}^{(j,k-j+2)}=j/(k-j+1)$, so the random walk matrix of $P$ is given by $$W_P=\frac{1}{k+1}\begin{pmatrix}j+1&j\\k-j&k-j+1\end{pmatrix},$$ whose second eigenvalue is $1/(k+1)$, with eigenvector $(1,-1)^T$. Thus because $1$ is the only positive eigenvalue of $W_{K_{[s]\setminus\Pi_{[s]}(\sigma)}}$, it follows that the second eigenvalue of $W_P\otimes W_{K_{[s]\setminus\Pi_{[s]}(\sigma)}}$ is $1/(k+1)$, as desired.
\end{proof}

\begin{lemma}
  \label{lem:outlocal}
  If $s\geq 2H$, then for every $1\leq k\leq H-1$ and $u\in V(G)$, every face $\sigma\in Z((k)_{(u)})$ satisfies $$\omega_2(Z_\sigma(0),Z_\sigma(1),m_\sigma)=\frac{1}{k+1}.$$
\end{lemma}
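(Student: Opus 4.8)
The plan is to follow the same tensor-product strategy used in the proof of Lemma~\ref{lem:inlocal}, but now for a face $\sigma\in Z((k)_{(u)})$, whose link has a different structure because $\sigma$ sits entirely inside $\{u\}\times[s]$ and the link therefore involves all neighbors $v\in N(u)$, not just a single edge. First I would unpack the link: for $\sigma\in Z((k)_{(u)})$, the vertex set is $Z_\sigma(0)=(\{u\}\times([s]\setminus\Pi_{[s]}(\sigma)))\sqcup\bigsqcup_{v\in N(u)}(\{v\}\times([s]\setminus\Pi_{[s]}(\sigma)))$, since an $H$-dimensional face containing $\sigma$ (with $k\le H-1$) must either stay in $Z((k+\ell)_{(u)})$-type faces — adding more $u$-vertices — or branch out to some neighbor $v$. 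An edge $\{\tau\setminus\sigma\}$ of $Z_\sigma$ joins two of these new vertices with distinct second coordinates, and the weight of such an edge is $w_{(u)}^{(k+2)}$ if both added vertices lie over $u$, is $w_{(u,v)}^{(k+1,1)}$ if one lies over $u$ and one over $v\in N(u)$, and is $w_{(u,v)}^{(k,2)}$ if both lie over the same $v$ (and there is no edge between a $v$-vertex and a $v'$-vertex for distinct $v,v'\in N(u)$).

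Next I would exhibit the link's 1-skeleton as a tensor product $R\otimes K_{[s]\setminus\Pi_{[s]}(\sigma)}$, where $R$ is a small weighted graph on vertex set $\{u\}\cup N(u)$ (so $|N(u)|+1$ vertices) with edge weights $w_{(u)}^{(k+2)}$ on the self-loop-free "$u$–$u$" slot — actually this is a self-loop on $u$ of weight $w_{(u)}^{(k+2)}$ — weight $w_{(u,v)}^{(k+1,1)}$ on each edge $\{u,v\}$, and a self-loop of weight $w_{(u,v)}^{(k,2)}$ at each $v$. The point is that after tensoring with $K_{[s]\setminus\Pi_{[s]}(\sigma)}$ (whose random walk matrix has $1$ as its unique positive eigenvalue, with multiplicity one, provided $s-k\ge 3$, which holds since $s\ge 2H$ and $k\le H-1$), the second-largest eigenvalue of the link's walk matrix equals $\max(\omega_2(W_R),\,|\omega_{\min}(W_R)|\cdot 0)$ — i.e. it is controlled by $\omega_2(W_R)$ together with the contributions of the other positive eigenvalues of $K$, which are all $\le 0$. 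So it reduces to computing $\omega_2(W_R)$ and checking it equals $\tfrac{1}{k+1}$, and that the negative eigenvalues of $W_R$ tensored against the $\le 0$ eigenvalues of $W_K$ do not produce something larger than $\tfrac{1}{k+1}$ — the latter is automatic since all of $W_K$'s nonzero eigenvalues besides $1$ are negative and small in absolute value for $s-k$ large, while $|\omega_i(W_R)|\le 1$.

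The heart of the computation is the spectrum of $W_R$. Using the ratios from Proposition~\ref{prop:weights} — namely $w_{(u,v)}^{(k,2)}/w_{(u,v)}^{(k+1,1)}=1/k$ for each $v\in N(u)$, and $w_{(u)}^{(k+2)}/\sum_{v\in N(u)}w_{(u,v)}^{(k+1,1)}=(k+1)\sum_{i=k+2}^{H}1/i$ — one can write down the weighted degree of $u$ and of each $v$ in $R$ and hence $W_R$ explicitly. I expect that $R$ has the structure of a "star with self-loops," so its random walk matrix, after quotienting by the symmetry among the leaves $v\in N(u)$, reduces to a small matrix (effectively $2\times 2$ on the span of the all-ones-on-leaves vector and the $u$-coordinate, plus an eigenspace of dimension $|N(u)|-1$ coming from the self-loop weight at the leaves). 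The leaf-only eigenspace contributes eigenvalue equal to the laziness ratio $w_{(u,v)}^{(k,2)}/(\text{degree of }v\text{ in }R)$; a short computation should show the relevant degree is $w_{(u,v)}^{(k,1)}$-related and that this eigenvalue works out to $\tfrac{1}{k+1}$ as well, matching the $2\times 2$ block's second eigenvalue. I would verify the arithmetic: the $2\times 2$ reduced walk matrix should again be stochastic with a second eigenvalue that the Proposition's identities force to be exactly $\tfrac{1}{k+1}$.

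The main obstacle I anticipate is bookkeeping the link's edge structure correctly — in particular being careful that (i) there are genuinely no edges between distinct neighbors $v,v'$ of $u$ in $Z_\sigma$ (true, since such a $\tau$ would project to a non-edge $\{v,v'\}$ or to more than two vertices of $G$), (ii) the self-loop at $u$ in $R$ has the right combinatorial multiplicity relative to the edges $\{u,v\}$, and (iii) the base case of the relevant induction/weight identities is the degenerate one where $w_{(u)}^{(k+2)}$ may vanish (when $k+1=H$, so $k=H-1$), which must be handled so the eigenvalue statement still reads $\tfrac{1}{k+1}=\tfrac1H$. Once the graph $R$ is pinned down, the eigenvalue computation is a routine application of Proposition~\ref{prop:weights} and the tensor-product spectral formula, exactly paralleling Lemma~\ref{lem:inlocal}; so I would budget most of the effort for setting up $R$ precisely and the rest for the short verification that the tensoring with $K_{[s]\setminus\Pi_{[s]}(\sigma)}$ does not inflate the second eigenvalue, which is where the hypothesis $s\ge 2H$ gets used.
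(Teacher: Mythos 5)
Your proposal follows essentially the same route as the paper: it identifies the 1-skeleton of the link of $\sigma\in Z((k)_{(u)})$ with a tensor product of a star-with-self-loops graph on $\{u\}\cup N(u)$ and the complete graph $K_{[s]\setminus\Pi_{[s]}(\sigma)}$, uses Proposition~\ref{prop:weights} to get the walk matrix of the star (leaf eigenspace giving eigenvalue $\tfrac{1}{k+1}$), and uses $s\geq 2H$ to check the tensor factor $K$ cannot raise the second eigenvalue, exactly as in the paper's proof. One small correction to your anticipated arithmetic: the second eigenvalue of the reduced $2\times 2$ block is not $\tfrac{1}{k+1}$ but $\tfrac{1}{k+1}-\tfrac{1}{1+(k+1)\sum_{\ell=k+2}^{H}1/\ell}$, which is strictly smaller (and negative when $k=H-1$), so the overall $\omega_2$ of the star is still $\tfrac{1}{k+1}$ and your argument goes through unchanged.
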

\begin{proof}
  The proof will proceed similarly to that of Lemma~\ref{lem:inlocal}. Consider a face $\sigma\in Z((k)_{(u)})$, and let $N(u)=\{v_1,\dots,v_d\}$. The link $Z_\sigma$ then has vertex set $Z_\sigma(0)=(\{u\}\cup N(u))\times([s]\setminus\Pi_{[s]}(\sigma))$ and edge set $$Z_\sigma(1)=\{\tau\setminus\sigma:\sigma\subset\tau\in Z(k+1)\}=\{\{(v,b_1),(v',b_2)\}\subset Z_\sigma(0):b_1\neq b_2,|\{v,v'\}\setminus\{u\}|\leq 1\}.$$ For such an edge $\{(v,b_1),(v',b_2)\}$, let $\tau=\sigma\cup\{(v,b_1),(v',b_2)\}$, so that there are three possible cases for the edge's weight $m(\tau)$:
  \begin{itemize}
  \item If $v=v'=u$, then $\tau\in Z((k+2)_{(u)})$, so $m(\tau)=w_{(u)}^{(k+2)}$.
  \item If $v=u,v'=v_i$ for some $i$, then $\tau\in Z((k+1,1)_{(u,v_i)})$, so $m(\tau)=w_{(u,v_i)}^{(k+1,1)}$.
  \item If $v=v'=v_i$ for some $i$, then $\tau\in Z((k,2)_{(u,v_i)})$, so $m(\tau)=w_{(u,v_i)}^{(k,2)}$.
  \end{itemize}

  Therefore letting $S$ be the $(d+1)$-vertex star graph with adjacency matrix
  \begin{equation*}
    M_S = \begin{pmatrix}
      w_{(u)}^{(k+2)} & w_{(u,v_1)}^{(k+1,1)} & w_{(u,v_2)}^{(k+1,1)} & \cdots & w_{(u,v_d)}^{(k+1,1)} \\
      w_{(u,v_1)}^{(k+1,1)} & w_{(u,v_1)}^{(k,2)} & 0 & \cdots & 0 \\
      w_{(u,v_2)}^{(k+1,1)} & 0 & w_{(u,v_2)}^{(k,2)} & & \vdots \\
      \vdots & \vdots & & \ddots & 0 \\
      w_{(u,v_d)}^{(k+1,1)} & 0 & \hdots & 0 & w_{(u,v_d)}^{(k,2)}
    \end{pmatrix},
  \end{equation*}
  it follows that the graph $(Z_\sigma(0),Z_\sigma(1),m_\sigma)$ is exactly given by $S\otimes K_{[s]\setminus\Pi_{[s]}(\sigma)}$. Let $x=w_{(u)}^{(k+2)}+\sum_{i=1}^dw_{(u,v_i)}^{(k+1,1)}$, so that by Proposition~\ref{prop:weights}, the random walk matrix of $S$ is therefore
  \begin{equation*}
    W_S = \begin{pmatrix}
      w_{(u)}^{(k+2)}/x & k/(k+1) & k/(k+1) & \cdots & k/(k+1) \\
      w_{(u,v_1)}^{(k+1,1)}/x & 1/(k+1) & 0 & \cdots & 0 \\
      w_{(u,v_2)}^{(k+1,1)}/x & 0 & 1/(k+1) & & \vdots \\
      \vdots & \vdots & & \ddots & 0 \\
      w_{(u,v_d)}^{(k+1,1)}/x & 0 & \hdots & 0 & 1/(k+1)
    \end{pmatrix},
  \end{equation*}
  Let $\delta_i\in\bR^{d+1}$ denote the $i$th basis vector, so that by this expression for $W_S$, every vector in the \mbox{codimension-2} subspace $\text{span}\{\vec{1},\delta_1\}^\perp$ is an eigenvector with eigenvalue $1/(k+1)$. The final eigenvector in $\text{span}\{\vec{1}\}^\perp$ is then given by $(w_{(u)}^{(k+2)}-x,w_{(u,v_1)}^{(k+1,1)},\dots,w_{(u,v_d)}^{(k+1,1)})^T$, with eigenvalue $$\frac{1}{k+1}-\frac{x-w_{(u)}^{(k+2)}}{x}=\frac{1}{k+1}-\frac{1}{1+(k+1)\sum_{\ell=k+2}^H1/\ell},$$ where the equality above holds by Proposition~\ref{prop:weights}. Thus $\omega_2(S)=1/(k+1)$. Because $s\geq 2H$ and $k\leq H-1$, it follows that $|[s]\setminus\Pi_{[s]}(\sigma)|=s-k\geq H+1$. Therefore the eigenvalues of $W_{K_{[s]\setminus\Pi_{[s]}(\sigma)}}$ are $1$ and $-1/(s-k-1)$, with $0\leq 1/(s-k-1)\leq 1/H\leq 1/(k+1)$, so it follows that all eigenvalues of $W_{K_{[s]\setminus\Pi_{[s]}(\sigma)}}$ that do not equal $1$ must have absolute value at most $1/(k+1)$. Therefore it follows from $\omega_2(S)=1/(k+1)$ that $\omega_2(S\otimes K_{[s]\setminus\Pi_{[s]}(\sigma)})=1/(k+1)$, as desired.
\end{proof}

\begin{lemma}
  \label{lem:global}
  For $1\leq i\leq n$, let
  \begin{equation}
    \label{eq:lazyomega}
    \tilde{\omega_i}(G) = \frac{1}{\sum_{\ell=1}^H1/\ell}\omega_i(G)+\left(1-\frac{1}{\sum_{\ell=1}^H1/\ell}\right)
  \end{equation}
  denote the eigenvalues of a lazy random walk on $G$. Then $$\omega_2(Z(0),Z(1),m)=\max\{\tilde{\omega}_2(G),-\tilde{\omega}_n(G)/(s-1)\}.$$
\end{lemma}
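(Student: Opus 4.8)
The plan is to identify the $1$-skeleton $(Z(0),Z(1),m)$ with a tensor product of a \emph{lazy} random walk on $G$ and the complete graph $K_{[s]}$, exactly in the spirit of the proofs of Lemmas~\ref{lem:inlocal} and~\ref{lem:outlocal}, and then read off the spectrum. First I would describe the $1$-skeleton explicitly: its vertex set is $Z(0)=V(G)\times[s]$, and since $H\ge 2$, $s\ge H+1$, and $G$ is connected, a pair $\{(v_1,b_1),(v_2,b_2)\}$ is an edge precisely when $b_1\ne b_2$ and either $v_1=v_2$ or $\{v_1,v_2\}\in E(G)$. By Lemma~\ref{lem:permact} each edge's weight depends only on its type: an edge with $v_1=v_2=u$ has weight $w_{(u)}^{(2)}$, and an edge with $\{v_1,v_2\}=\{u,v\}\in E(G)$ has weight $w_{(u,v)}^{(1,1)}$. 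Thus $(Z(0),Z(1),m)=\widetilde G\otimes K_{[s]}$, where $\widetilde G$ is the weighted graph on $V(G)$ carrying a self-loop of weight $w_{(u)}^{(2)}$ at each vertex $u$ and an edge of weight $w_{(u,v)}^{(1,1)}$ for each $\{u,v\}\in E(G)$, and $K_{[s]}$ is the complete graph on $[s]$ with no self-loops.

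Next I would compute the random walk matrix of $\widetilde G$. The weighted degree of $u$ in $\widetilde G$ is $w_{(u)}^{(2)}+\sum_{v\in N(u)}w_{(u,v)}^{(1,1)}$; using that $w_{(u,v)}^{(1,1)}=C\,w_G(\{u,v\})$ for a constant $C$ independent of the edge (Lemma~\ref{lem:weightratio} with $k=2$) and that $w_{(u)}^{(2)}=\big(\sum_{i=2}^H 1/i\big)\sum_{v\in N(u)}w_{(u,v)}^{(1,1)}$ (the second identity of Proposition~\ref{prop:weights} with $k=1$), a short computation gives $W_{\widetilde G}=(1-p)W_G+pI$ with $p=\frac{\sum_{i=2}^H 1/i}{\sum_{i=1}^H 1/i}=1-\frac{1}{\sum_{\ell=1}^H 1/\ell}$, independent of $u$. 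Hence $W_{\widetilde G}$ is exactly the lazy random walk on $G$ appearing in~(\ref{eq:lazyomega}), with eigenvalues $\tilde\omega_i(G)$; in particular $\tilde\omega_1(G)=1$ with multiplicity one since $G$ is connected.

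Finally I would invoke the tensor structure: the random walk matrix of the $1$-skeleton equals $W_{\widetilde G}\otimes W_{K_{[s]}}$, and $W_{K_{[s]}}$ has eigenvalues $1$ (once) and $-1/(s-1)$ (with multiplicity $s-1$). So the spectrum of the $1$-skeleton's walk is exactly $\{\tilde\omega_i(G)\cdot\mu:1\le i\le n,\ \mu\in\{1,-1/(s-1)\}\}$; its largest element is $\tilde\omega_1(G)\cdot 1=1$ with multiplicity one, and the second largest is $\max\{\max_{i\ge 2}\tilde\omega_i(G),\,-\tfrac{1}{s-1}\min_i\tilde\omega_i(G)\}=\max\{\tilde\omega_2(G),\,-\tilde\omega_n(G)/(s-1)\}$, where the only remaining candidate $\tilde\omega_1(G)\cdot(-1/(s-1))=-1/(s-1)$ is dominated because $\tilde\omega_n(G)\le\tilde\omega_1(G)=1$. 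This is the claimed value.

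The main work is the second paragraph: checking that the self-loop weights $w_{(u)}^{(2)}$ and cross-edge weights $w_{(u,v)}^{(1,1)}$ combine to a genuine lazy walk on $G$ with a \emph{vertex-independent} laziness parameter. This is exactly where Proposition~\ref{prop:weights} (and the edge-independence from Lemma~\ref{lem:weightratio}) is needed, and it is what forces the laziness to be precisely $1/\sum_{\ell=1}^H 1/\ell$, matching~(\ref{eq:lazyomega}). Everything else is the routine tensor-product eigenvalue bookkeeping already carried out twice in Lemmas~\ref{lem:inlocal} and~\ref{lem:outlocal}, together with the trivial observation that the residual eigenvalue $-1/(s-1)$ never exceeds $-\tilde\omega_n(G)/(s-1)$.
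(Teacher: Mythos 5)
Your proposal is correct and follows essentially the same route as the paper: identify the 1-skeleton with $\tilde G\otimes K_{[s]}$, use Proposition~\ref{prop:weights} and Lemma~\ref{lem:weightratio} to show $W_{\tilde G}$ is exactly the lazy walk $\frac{1}{\sum_{\ell=1}^H 1/\ell}W_G+\bigl(1-\frac{1}{\sum_{\ell=1}^H 1/\ell}\bigr)I$, and then read off the tensor-product spectrum with $K_{[s]}$-eigenvalues $1$ and $-1/(s-1)$. Your extra bookkeeping ruling out the candidate $-1/(s-1)$ is a fine, slightly more explicit version of the paper's final step.
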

\begin{proof}
  Consider any $\tau=\{(u,b_1),(v,b_2)\}\in Z(1)$. If $u=v$ then $\tau\in Z((2)_{(u)})$ so that $m(\tau)=w_{(u)}^{(2)}$, while if $u\neq v$ then $\tau\in Z((1,1)_{(u,v)})$ so that $m(\tau)=w_{(u,v)}^{(1,1)}$. Therefore define $\tilde{G}$ to be the undirected graph with $V(\tilde{G})=V(G)$, $E(\tilde{G})=E(G)\cup V(G)$, and with edge weight function $w_{\tilde{G}}(\cdot)$ given for $\{u,v\}\in E(G)$ by $w_{\tilde{G}}(\{u,v\})=w_{(u,v)}^{(1,1)}$ and $w_{\tilde{G}}(\{u\})=w_{(u)}^{(2)}$. Then the graph $(Z(0),Z(1),m)$ is exactly given by $\tilde{G}\otimes K_{[s]}$. Let $W_{\tilde{G}}$ denote the random walk matrix of $\tilde{G}$, and let $W_{\tilde{G}}'$ denote $W_{\tilde{G}}$ with all diagonal entries zeroed out, and let $W_{\tilde{G}}''$ denote $W_{\tilde{G}}$ with all non-diagonal entries zeroed out, so that $W_{\tilde{G}}=W_{\tilde{G}}'+W_{\tilde{G}}''$. Then for any $u\in V(G)$,
  \begin{align*}
    W_{\tilde{G}}''(u,u) &= \frac{w_{(u)}^{(2)}}{w_{(u)}^{(2)}+\sum_{v\in N(u)}w_{(u,v)}^{(1,1)}} = \frac{1}{1+\frac{\sum_{v\in N(u)}w_{(u,v)}^{(1,1)}}{w_{(u)}^{(2)}}} = \frac{\sum_{\ell=2}^H\frac{1}{\ell}}{\sum_{\ell=1}^H\frac{1}{\ell}},
  \end{align*}
  where the final equality holds by Proposition~\ref{prop:weights}. Therefore $W_{\tilde{G}}''=(\sum_{\ell=2}^H\frac{1}{\ell}/\sum_{\ell=1}^H\frac{1}{\ell})I$. Furthermore, for any $v\neq u$,
  \begin{align*}
    \left(\sum_{\ell=1}^H\frac{1}{\ell}\right)W_{\tilde{G}}'(v,u)
    &= \frac{w_{(u)}^{(2)}+\sum_{v'\in N(u)}w_{(u,v')}^{(1,1)}}{\sum_{v'\in N(u)}w_{(u,v')}^{(1,1)}} \cdot \frac{w_{(u,v)}^{(1,1)}}{w_{(u)}^{(2)}+\sum_{v'\in N(u)}w_{(u,v')}^{(1,1)}} \\
    &= \frac{w_{(u,v)}^{(1,1)}}{\sum_{v'\in N(u)}w_{(u,v')}^{(1,1)}} \\
    &= \frac{w_G(\{u,v\})}{\sum_{v'\in N(u)}w_G(\{u,v'\})} \\
    &= W_G(v,u),
  \end{align*}
  where the first equality above holds by Proposition~\ref{prop:weights}, and the third equality by Lemma~\ref{lem:weightratio}. Thus in summary,
  \begin{equation*}
    W_{\tilde{G}} = W_{\tilde{G}}'+W_{\tilde{G}}'' = \frac{1}{\sum_{\ell=1}^H1/\ell}W_G+\left(1-\frac{1}{\sum_{\ell=1}^H1/\ell}\right)I,
  \end{equation*}
  so $W_{\tilde{G}}$ is simply a lazy instance of the random walk $W_G$, and in particular for all $1\leq i\leq n$, $$\omega_i(\tilde{G})=\frac{1}{\sum_{\ell=1}^H1/\ell}\omega_i(G)+\left(1-\frac{1}{\sum_{\ell=1}^H1/\ell}\right)=\tilde{\omega}_i(G).$$ Because the eigenvalues of $K_{[s]}$ are $1$ and $-1/(s-1)$, it follows that $$\omega_2(\tilde{G}\otimes K_{[s]})=\max\{\omega_2(\tilde{G}),-\omega_n(\tilde{G})/(s-1)\},$$ as desired.
\end{proof}

\begin{proof}[Proof of Theorem~\ref{thm:localexp}]
  For every $0\leq k\leq H-2$, each $\sigma\in Z(k)$ by definition either lies in $Z((j,k+1-j)_{(u,v)})$ or in $Z((k+1)_{(u)})$ for some $1\leq j\leq k$ and $\{(u,v)\}\in E(G)$. Therefore Lemma~\ref{lem:inlocal} and Lemma~\ref{lem:outlocal} together imply that the link of every $\sigma\in Z(k)$ has expansion $\nu_2(Z_\sigma(0),Z_\sigma(1),m_\sigma)=\frac{k+1}{k+2}$, so $$\nu^{(k)}(Z)=\frac{k+1}{k+2}.$$

  For the global expansion statement, letting $\tilde{\omega}_i(G)$ be defined as in~(\ref{eq:lazyomega}), then by Lemma~\ref{lem:global}, $$\nu^{(-1)}(Z)=\nu_2(Z(0),Z(1),m)=1-\max\{\tilde{\omega}_2(G),-\tilde{\omega}_n(G)/(s-1)\}.$$ Now because $n\geq 4$, $H\geq 2$, and $s\geq 4$ by assumption, then $\omega_2(G)\geq-1/3$ and $\sum_{\ell=1}^H1/\ell\geq 3/2$, which implies that $\tilde{\omega_2}(G)\geq 1/9$ and $\tilde{\omega_n}(G)\geq-1/3$, and thus $\tilde{\omega_2}(G)\geq-\tilde{\omega}_n(G)/(s-1)$, so $$\nu^{(-1)}(Z)=1-\tilde{\omega_2}(G)=\frac{\nu_2(G)}{\sum_{\ell=1}^H1/\ell}\geq\frac{\nu_2(G)}{1+\log H}.$$
\end{proof}

\section{Discussion and future directions}
\label{sec:conc}
Given the constructions of ordinary expanders using graph products (e.g.~\cite{reingold_entropy_2002}), it seems natural to investigate simplicial complex product constructions that yield high-dimensional expanders. From this perspective, the construction $Q$ of Liu et al.~\cite{liu_high-dimensional_2020} is quite interesting, as it may be viewed as a form of tensor product between a graph $G$ and the complete simplicial complex on $s$ vertices. The principal drawback with $Q$ lies in the exponential dependence of the spectral gap $\Omega(\frac{\nu_2(G)}{k2^k})$ of the up-down walk on the dimension $k$. By reducing this dependence to quadratic with a spectral gap of $\Omega(\frac{\nu_2(G)}{k^2\log H})$, the construction $Z$ greatly improves the mixing time of high-dimensional up-down walks, while maintaing the product-based nature of the construction. However, the optimal spectral gap of the $k$th order up-down walk grows as $\Omega(\frac{1}{k})$, which is achieved by Ramanujan complexes and by the constructions based on coset geometries of Kaufman and Oppenheim~\cite{kaufman_high_2020} and Friedgut and Iluz~\cite{friedgut_hyper-regular_2020}. It is therefore natural to ask whether the construction $Z$ can be further optimized to reduce the quadratic dependence on $k$ to linear. Below, we suggest a negative answer to this question, by analyzing the implications for local expansion.

The determination of the optimal local expansion for any ``graph-product-based construction'' was posed as an open question by Liu et al.~\cite{liu_high-dimensional_2020}, although no formal definition for a graph-product-based construction was proposed. For concreteness, fix a dimension $H$, and let a graph-product-based construction be one such as $Z$ that takes as input a graph $G$, and outputs an $H$-dimensional simplicial complex with the same faces as $Q$, but with an arbitrary weight function.\footnote{The reasoning presented here also applies to more general constructions.} The following reasoning suggests that no such construction can improve upon the $k$-dimensional local expansion $\nu^{(k)}(Z)=\frac{k+1}{k+2}$ of $Z$. For if some graph-product-based construction $Z'$ were to satisfy $\nu^{(k)}(Z')>\frac{k+1+\epsilon}{k+2+\epsilon}$ for some $\epsilon>0$ when $G$ is chosen from a family of $d$-regular expanders, then inductively applying Proposition~\ref{prop:localdown} would imply that $\nu^{(j)}(Z')>\frac{j+1+\epsilon}{j+2+\epsilon}$ for all $j\leq k$, so that $\nu^{(-1)}(Z')>\frac{\epsilon}{1+\epsilon}$. But as observed in Section~\ref{sec:localexp}, the product-based structure ensures that the link of any face of dimension $\geq 0$ obtains its structure from the structure of the neighborhood of a single vertex in $G$, while the entire 1-skeleton of $Z'$ inherits the global structure of $G$. Thus if $G$ is instead chosen from a family of $d$-regular graphs with sufficiently poor expansion, then the 1-skeleton of $Z'$ will now inherit this poor expansion, so $\nu^{(-1)}(Z')<\frac{\epsilon}{1+\epsilon}$. But all $d$-regular graphs have the same local structure in the neighborhood of a vertex, so it will still hold that $\nu^{(k)}(Z')>\frac{k+1+\epsilon}{k+2+\epsilon}$, and therefore $\nu^{(-1)}(Z')>\frac{\epsilon}{1+\epsilon}$, a contradiction.

With $k$-dimensional local expansion $\nu^{(k)}=\frac{k+1}{k+2}$, the bound from Theorem~\ref{thm:localtomix} on the spectral gap of the $k$-dimensional up-down walk is at best $\frac{1}{(k+2)(k+1)}$. Indeed, a quadratic dependence on $k$ in mixing time also arose in the discussion in Section~\ref{sec:relweights}, which shows how for a given edge $\{u,v\}\in G$, the $k$-dimensional up-down walk on a graph-product-based construction such as $Z$ proceeds within the faces $\bigcup_{j=1}^kZ((j,k+1-j)_{(u,v)})$ analagously to a random walk on a $k$-vertex path, which has a mixing time of $\Omega(k^2)$.\footnote{For graph-product-based constructions other than $Z$, the analagous path graph may have arbitrary edge weights. We believe that such paths have mixing time $\Omega(k^2)$, but we have not proven such a bound.} These observations suggest that no graph-product-based construction obtains better than a quadratic dependence on $k$ in the spectral gap of the $k$-dimensional up-down walk. It is an interesting open problem to develop alternative combinatorial constructions of high-dimensional expanders that attain the optimal up-down walk spectral gap of $\Omega(\frac{1}{k})$.

\section{Acknowledgements}
The author thanks Salil Vadhan for numerous helpful comments and discussions.

\bibliography{library}
\bibliographystyle{alpha}

\end{document}